\documentclass[10pt,journal]{IEEEtran}
\usepackage{amsmath,amssymb,amsfonts,bm}
\usepackage{graphicx}
\usepackage{booktabs}
\usepackage{array}
\usepackage{cite}
\usepackage{xcolor}
\usepackage{algpseudocode}
\usepackage{url}
\usepackage{mathrsfs}
\usepackage[hidelinks]{hyperref}

\newtheorem{theorem}{Theorem}

\newtheorem{proposition}{Proposition}
\newtheorem{cor}{Corollary}

\newtheorem{rmk}{Remark}

\renewcommand{\le}{\leqslant}
\renewcommand{\leq}{\leqslant}

\renewcommand{\geq}{\geqslant}

\newcounter{algorithm}

\newcommand{\algorithmheading}[1]{%
  \refstepcounter{algorithm}%
  \hrule height 0.8pt\relax\kern 2pt
  \noindent\textbf{Algorithm~\thealgorithm.}\enspace #1\par
  \kern 2pt\hrule height 0.4pt\relax\kern 2pt}
\newcommand{\algorithmendrule}{%
  \par\kern 2pt\hrule height 0.8pt\relax}

\allowdisplaybreaks

\title{Prediction-Aware Structured Resource Control for Partitioned IRS-Assisted Mobile IoT Uplinks With Semi-Blind Cascaded-Channel Acquisition}

\author{Hediyeh~Soltanizadeh and~Ardavan~Rahimian,~\IEEEmembership{Senior Member,~IEEE}%
\thanks{\textit{(Corresponding author: Ardavan Rahimian.)}}%
\thanks{H.~Soltanizadeh is with the School of Electrical Engineering, Iran University of Science and Technology (IUST), Tehran, Iran
(e-mail: he.soltanizadeh@gmail.com).}%
\thanks{A.~Rahimian is with the School of Engineering, Ulster University, Belfast BT15 1AP, U.K.
(e-mail: a.rahimian@ulster.ac.uk).}}

\begin{document}

\maketitle

\begin{abstract}
Intelligent reflecting surface (IRS)-assisted mobile Internet of Things (IoT) uplinks require joint control of channel aging, costly cascaded channel state information (CSI) acquisition, and coupled IRS/radio resources. This investigation develops a unified prediction-aware deterministic framework for partitioned IRS-assisted massive multiple-input multiple-output (mMIMO) uplinks. Direct channels are recursively tracked, whereas cascaded channels are selectively refreshed by differential semi-blind acquisition reusing the same unknown data block across a minimum-reflection baseline and discrete Fourier transform (DFT)-coded IRS states. Moreover, finite-block acquisition covariances initialize reduced-order beam-domain prediction, which propagates only unit-beam means online and retrieves age-dependent covariances from precomputed tables. A mixed-integer nonconvex formulation captures net throughput, fairness, outage, uncertainty, switching cost, and sliding-window IRS service. The causal controller integrates covariance-floor actionability, feasible equal-size granularity, residual-feasible ownership, isolated-beam one-sweep refinement, same-block fresh-payload accounting, and service-first allocation using recomputed uncertainty-aware zero-forcing (ZF) rates and bidirectional receiver-consistent risk. Analysis establishes service-feasibility preservation, effective-channel error bounds, distribution-free reliability, and polynomial online scaling. In unseen environments, the predicted acquisition floor matches the practical semi-blind error, actionable recalibration reduces the mean recalibration rate by \(31.5\%\), and IRS-assisted control achieves a \(14.09\%\) mean net-rate gain over direct-only transmission. Receiver-consistent allocation improves realized rate, while ablations support equal-size partitioning and single-sweep refinement; zero post-action service violations occur under nominal, mobility, and load-stress conditions.
\end{abstract}

\begin{IEEEkeywords}
Channel allocation, channel prediction, intelligent reflecting surface (IRS), Internet of Things, IRS partitioning, massive MIMO, semi-blind channel acquisition, uncertainty-aware resource control, wireless systems.
\end{IEEEkeywords}

\section{Introduction}
The increasing deployment of mobile and vehicular Internet of Things (IoT) applications requires wireless networks that can efficiently adapt to user mobility, dynamic propagation environments, and limited radio resources, while maintaining reliable connectivity \cite{ TAP24, CHXTHZ24}. Massive multiple-input multiple-output (mMIMO) is a promising enabling technology for meeting these requirements through spatial multiplexing, coherent beamforming, and effective interference management \cite{ CJZL21}. Intelligent reflecting surfaces (IRSs) offer a complementary solution by using a large number of nearly passive reflecting elements to intelligently redirect signals and improve coverage without requiring an individual radio frequency (RF) chain for each element. Integrating IRSs with mMIMO enables adaptive control of the wireless propagation environment, providing additional spatial degrees of freedom for more efficient interference management and enhanced link reliability \cite{GV22,AKOZ25}.

Nevertheless, the performance gains offered by mMIMO-assisted IRSs rely heavily on accurate and timely channel state information (CSI), which remains difficult to maintain in mobile IoT scenarios due to channel aging, pilot limitations, and the large dimensionality of cascaded IRS channels \cite{IMDGB25}. As channel estimates rapidly become outdated in mobile or vehicular scenarios, resource-allocation decisions based on stale CSI may experience significant performance degradation. This difficulty is more pronounced in IRS-assisted systems since a passive IRS has no dedicated RF chains for transmitting pilots or processing baseband observations. As a result, the dimension of the cascaded channel grows with the number of reflecting elements, which can lead to substantial training overhead. This issue becomes more severe in mobile or dense environments, where rapid channel variations, channel aging, topology changes, and pilot scarcity require mobility-aware tracking methods that recursively combine previous CSI with current observations for both active and inactive devices \cite{WLZC25, SF24, SF21, SF22}. Existing CSI-acquisition methods are mainly designed for static or slowly varying settings and become increasingly inefficient when user mobility, selective cascaded-CSI refresh, and multiuser IRS sharing must be handled jointly.

Recent IRS research has increasingly addressed the CSI overhead and channel aging that arise when cascaded channels must be maintained under mobility. Individual IRS-channel acquisition with explicit accuracy-overhead tradeoffs was studied in \cite{Zhou2024IndividualCE}, while a two-timescale estimator exploiting the slower variation of the IRS-BS link was developed in \cite{Huang2024TwoTimescaleCE}. Multiuser cascaded-channel structure was used in \cite{Wang2025QuantizeEstimate} to reduce estimation and feedback overhead, whereas reduced-dimensional estimation and channel prediction under fast fading and channel aging were combined in \cite{Ginige2025ChannelPrediction}. Two-timescale massive-MIMO operation under imperfect CSI was further investigated in \cite{Zhi2023TwoTimescale}. More recently, CSI acquisition, feedback, and active/passive beamforming were jointly optimized in an end-to-end IRS framework in \cite{Cui2026E2ECSI}, bringing channel acquisition closer to the subsequent communication design. Robust IRS-assisted transmission under high mobility and imperfect cascaded CSI was considered in \cite{Liu2026RobustHST} using bounded and statistical CSI-error models. These studies motivate structured temporal CSI processing and uncertainty-aware design, but they do not jointly address selective cascaded-CSI refresh, reduced-order posterior propagation, and causal multiuser resource control.

Semi-blind and data-aided methods provide a means of reducing training overhead. Pilot and data symbols were jointly exploited for semi-blind reflected-channel estimation in \cite{AE22}. Data-aided recovery of the combined channel and transmitted symbols was considered in \cite{Magalhaes2025DataAided}, while joint semi-blind channel estimation and symbol detection for multi-IRS MIMO was studied in \cite{Li2025SemiBlind}. These methods mainly target channel-recovery accuracy and pilot reduction. They do not explicitly connect finite-block acquisition uncertainty to reduced-order prediction, selective CSI refresh, and the payload cost incurred by recalibration within the same physical block.

Resource control for IRS-assisted multiuser systems was studied from several dimensions, including surface partitioning, imperfect CSI, fairness, and long-term service constraints. Joint IRS partitioning and multiuser scheduling were studied in
\cite{Hashida2026Partitioning}, whereas \cite{Wu2025RobustRA} considered discrete-phase resource allocation under both perfect and imperfect CSI. Long-term channel assignment and IRS configuration under queue-stability requirements were addressed in \cite{Jia2025DelayAware}, while \cite{Zivuku2025Fairness} developed a robust multi-IRS design accounting for geographical fairness, service availability, and worst-case quality of service (QoS) under bounded CSI uncertainty. More recently, \cite{Chan2026LoadBalanced} jointly optimized load-balanced user association, BS beamforming, and IRS phase control in multi-BS mmWave networks using uplink pilots directly, thus bypassing explicit CSI reconstruction. Although these studies cover important aspects of partitioning, robust allocation, fairness, queue stability, and load balancing, they do not jointly couple selective cascaded-CSI refresh and its same-block payload cost with reduced-order uncertainty propagation, service-feasible IRS sharing, beam refinement, and receiver-consistent cochannel allocation.

The closest system-level baseline is the mobility-aware IRS-assisted uplink framework in \cite{ArdavanBaseline}, which combines finite-resolution IRS focusing, inverse-rate user prioritization, and sequential energy-detection-based channel allocation. However, its IRS control is restricted to single-user focusing, its CSI evolution is simplified, and its uplink-channel assignment relies on a sensing-based heuristic. Motivated by these limitations, this work develops a unified prediction-aware deterministic framework for partitioned IRS-assisted mobile mMIMO IoT uplinks, integrating selective cascaded-CSI refresh, service-feasible multiuser IRS sharing, and receiver-consistent uplink allocation. The contributions are as follows:

\begin{enumerate}

\item A mobility-aware acquisition combines recursive direct-channel tracking with differential semi-blind cascaded-channel recalibration. Reusing the same unknown data block under a minimum-reflection baseline and DFT-coded IRS states removes the direct component before recovery. Anchor-mismatch analysis explains the resulting cross-user error floor, while offline finite-block covariances initialize beam-domain posterior uncertainty.

\item A reduced-order beam-domain predictor is developed over fixed elementary IRS units. Only \(M\)-dimensional unit-beam means are propagated online, while age-dependent error covariances are retrieved from precomputed tables to form uncertainty-aware effective channels and zero-forcing (ZF) rates, avoiding full \(MN\)-dimensional covariance propagation. The uncertainty model is validated against empirical prediction errors.

\item A mixed-integer nonconvex formulation captures net throughput, fairness, outage, prediction uncertainty, IRS switching cost, and sliding-window service. Its deterministic realization combines covariance-floor actionability, feasible equal-size granularity, residual-feasible ownership, and isolated-beam initialization with one coordinate sweep. Ablations support the equal-size and single-sweep choices over more complex alternatives.

\item The full-slot controller charges recalibration overhead through same-block fresh-payload accounting and preserves sliding-window IRS service. Uplink channels are assigned in service-first order using recomputed uncertainty-aware ZF rates and bidirectional receiver-consistent risk, improving performance over rate-only allocation without exhaustive predicted-rate selection.

\item Theoretical studies establish IRS-service feasibility and sequential preservation, effective-channel prediction-error bounds, and a distribution-free reliability bound. Complexity analysis shows polynomial online scaling. On unseen environments, actionable recalibration reduces the mean recalibration rate by \(31.5\%\), IRS-assisted control achieves a \(14.09\%\) mean net-rate gain over direct-only transmission, and no post-action service violations occur under nominal, mobility, or load stress.

\end{enumerate}

\section{System Model}

Consider a single-cell  mMIMO system in which an $M$-antenna base station (BS) serves $Z$ single-antenna mobile IoT users under time-division duplexing. Since the number of users exceeds the available orthogonal pilot sequences $\tau$, each pilot sequence is assigned to a group of $K=Z/\tau$ users, assuming that $K$ is an integer.
Because users assigned to different pilot groups can be separated through pilot orthogonality, the analysis can be restricted to the first group IoT users $\mathcal K=\{1,2,\ldots,K\}$ assigned to the pilot sequence $\boldsymbol\phi_1\in\mathbb C^{\tau_p\times 1}$.  Although the formulation is applicable to a MIMO-OFDM system over one coherence band, it can equivalently represent a narrowband single-carrier system.

Each user independently decides whether to transmit its assigned pilot during each time slot. 
Let $b_t^{(k)}\in\{0,1\}$ denote the pilot-activity indicator of user $k$, where \(b_t^{(k)}=1\) if the pilot is transmitted in slot \(t\) and \(b_t^{(k)}=0\) otherwise. Define the set of pilot-active users $\mathcal D_t \triangleq \big\{ k\in\mathcal K: b_t^{(k)}=1 \big\}$. The BS identifies $\mathcal D_t$ through the coordinated random-access protocol before channel estimation, and the scheduled data-user set is assumed to coincide with $\mathcal D_t$. Since this paper considers a high-rate IoT scenario, \(\mathcal D_t \neq\emptyset\) is assumed throughout.

\begin{figure}[t!]
\centering
\includegraphics[width=\linewidth]{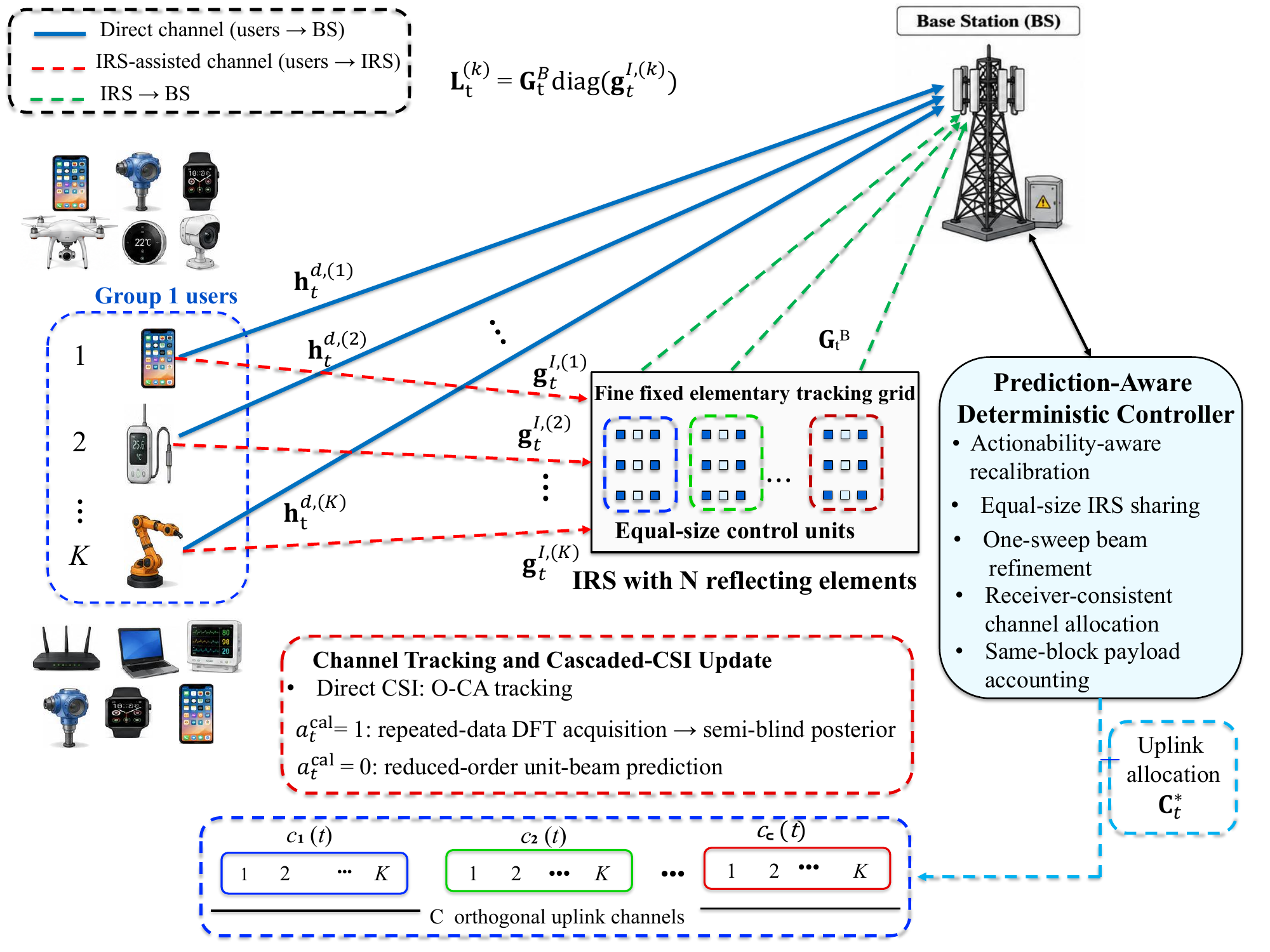}
\caption{System model and causal prediction-aware control architecture for
the partitioned IRS-assisted mobile mMIMO IoT uplink.}
\label{fig:system_model}
\end{figure}

The system also includes a fully passive IRS with $N$ reflecting elements. Let $\mathcal N =\{1,\ldots,N\}$ and $\mathcal C=\{1,\ldots,C\}$ denote the IRS-element set and the set of orthogonal uplink channels, respectively. The uplink channel assigned to user $k$ at slot $t$ is denoted by $ c^{(k)}_{t}$.  Let $\mathbf h_{t}^{d, (k)}\in\mathbb C^{M \times 1}$ denote the direct-channel from user $k$ to the BS, $\mathbf g_{t}^{I, (k)}\in\mathbb C^{N \times 1}$ the user-to-IRS channel, and $\mathbf G_{t}^{B}\in\mathbb C^{M\times N}$ the IRS-to-BS channel, for $k=1,\ldots,K$  (see Fig.~\ref{fig:system_model}).  The IRS-to-BS channel is common to all users, whereas the direct and user-to-IRS channels are user dependent.  Because the IRS is fully passive, the BS observes the user-to-IRS and IRS-to-BS links through their cascaded response rather than as separately identifiable channels. For user $k$, the cascaded channel across the $N$ IRS elements is defined as
\begin{equation}
\mathbf L_{t}^{(k)} \triangleq \mathbf G_{t}^{B} \operatorname{diag}\!\big(
\mathbf g_{t}^{I, (k)} \big) \in\mathbb C^{M\times N},
\label{eq:1}
\end{equation}
where $\mathbf L_{t}^{(k)}=\big[\bm{l}_{1,t}^{(k)},\ldots,\bm{l}_{N,t}^{(k)}\big]$ and
$\bm{l}_{n,t}^{(k)}=[\mathbf G_t^{B}]_{:,n}[\mathbf g_t^{I,(k)}]_n\in\mathbb C^{M \times 1}$. The proposed semi-blind acquisition framework directly estimates $\mathbf L_{t}^{(k)}$, rather than separately estimating $\mathbf G_{t}^{B}$ and $\mathbf g_{t}^{I, (k)}$.

The direct channel is $\mathbf h_t^{d,(k)}\sim \mathcal{CN} \big(\mathbf 0,d_t^{d,(k)}\mathbf I_M\big),$ where \(d_t^{d,(k)}\) denotes the slow-timescale large-scale attenuation of the direct link, accounting for path-loss variation and shadowing. Following the conventional mMIMO assumptions, the direct user channels are assumed to exhibit favorable propagation over the BS array. Moreover, the direct channels are asymptotically uncorrelated with the cascaded components. For \(i\neq k\), the direct user channels satisfy:
\begin{align}
&\frac{1}{M}(\mathbf h_t^{d,(i)})^{H} \mathbf h_t^{d,(k)} \xrightarrow[M\rightarrow\infty]{\mathrm{a.s.}}0, \label{eq:2}\\
& \frac{1}{M}(\mathbf h_t^{d,(k)})^{H} \mathbf h_t^{d,(k)}\xrightarrow[M\rightarrow\infty]{\mathrm{a.s.}} d_t^{d,(k)}. \label{eq:3}
\end{align}
Moreover, for all $i$, $k$, and $n\in\mathcal N$,
\begin{align}
\frac{1}{M} (\bm{l}_{n,t}^{(i)})^{H} \mathbf h_t^{d,(k)} \xrightarrow[M\rightarrow\infty]{\mathrm{a.s.}}0. \label{eq:4}
\end{align}
These properties support the use of the tracked direct-channel matrix as a multiuser anchor in Section~IV-A, where the baseline and differential observations are used to separate the user-specific cascaded responses.

To model the temporal evolution of the direct channel induced by users' mobility, we adopt the first-order Gauss-Markov model in \cite{SFR21}:
\begin{equation}
\mathbf{h}_t^{d, (k)}=\mathbf{A}_t^{d, (k)}\mathbf{h}_{t-1}^{d, (k)}+\mathbf{u}_t^{d, (k)}, \label{eq:5}
\end{equation}
where $\mathbf{A}_t^{d, (k)}$ is the temporal correlation matrix, and $\mathbf{u}_t^{d, (k)}\sim \mathcal{CN}(\mathbf{0},\mathbf{Q}_t^{d, (k)})$ is the channel state noise to model small-scale fading. The matrices $\mathbf A_t^{d,(k)}$ and $\mathbf Q_t^{d,(k)}$ are obtained from long-term channel statistics and are selected consistently with the marginal covariance $d_t^{d,(k)}\mathbf I_M$. Let $\mathbf Y_t$ denote the received pilot matrix at the BS.  In order to suppress the contributions of users belonging to the other groups, the received pilot matrix is projected onto the orthonormal pilot sequence assigned to group 1:
\begin{equation}
\mathbf y_t=\mathbf Y_t\boldsymbol\phi_1=\sum_{k\in\mathcal D_t}\mathbf h_t^{d,(k)}
+\mathbf v_t,   \label{eq:6}
\end{equation}
where \(\mathbf v_t=\mathbf V_t\boldsymbol\phi_1\sim\mathcal{CN}(\mathbf 0,\mathbf R_t)\).  Using all pilot observations collected up to slot $t$, the BS updates the channels of active users while predicting the channels of inactive users.

The direct and cascaded channels are treated separately because they follow different observation models. Section~III tracks the direct channels from pilot observations using the optimal coordinated-access (O-CA) algorithm. Section~IV then acquires the cascaded channels from repeated unknown payload blocks collected under a minimum-reflection baseline and DFT-coded IRS states. Hence, baseline subtraction removes the direct component directly at the observation level before cascaded-channel recovery. Recalibration is performed only on selected slots, while the cascaded states are propagated by prediction between recalibration events.
\begin{rmk}
For a scalar metric $x_t$, the clipped training-statistics normalization is defined as
\begin{equation}
\mathcal N_x(x_t) \triangleq \left[\frac{x_t-x_{\min}^{\mathrm{tr}}} {x_{\max}^{\mathrm{tr}}-x_{\min}^{\mathrm{tr}}+\epsilon_x} \right]_{0}^{1},
\label{eq:7}
\end{equation}
where $[u]_{0}^{1}\triangleq\min\{1,\max\{0,u\}\}$, $\epsilon_x>0$, and the training statistics remain fixed during validation and testing.
\end{rmk}

\section{Low-Complexity Direct-Channel Tracking}

During the direct-channel pilot stage, the IRS is operated in its minimum-reflection state, and the active users in $\mathcal D_t$ transmit the pilot sequence $\boldsymbol{\phi}_1$. The O-CA algorithm is employed to recursively track the direct user-to-BS channels, whose predictions serve as prior information for the subsequent cascaded-channel acquisition and resource optimization procedures. According to \cite{SF23}, O-CA computes an $M\times 1$ channel state and also $M\times M$ covariance and cross-covariance matrices for each user  to avoid working directly with a large $KM\times KM$ covariance matrix. The O-CA algorithm consists of the following two steps:

{\bf{Prediction Step.}} For every pair of users  $k, l = 1, \dots, K$ and $k\neq l$, the prediction equations are as follows:
\begin{align}
&\widehat{\mathbf{h}}_{t|t-1}^{d, (k)}=\mathbf{A}_t^{d, (k)}\widehat{\mathbf{h}}_{t-1|t-1}^{d, (k)},\label{eq:8}\\ 
&\mathbf{P}_{t|t-1}^{d, (k,k)}=\mathbf{A}_t^{d, (k)}\mathbf{P}_{t-1|t-1}^{d, (k,k)}(\mathbf{A}_t^{d, (k)})^H+\mathbf{Q}_t^{d, (k)}, \label{eq:9}\\ 
&\mathbf{P}_{t|t-1}^{d, (k,l)}=\mathbf{A}_t^{d, (k)}\mathbf{P}_{t-1|t-1}^{d, (k,l)}(\mathbf{A}_t^{d, (l)})^H.\label{eq:10}
\end{align}
Specifically, the cross-covariance matrices capture the statistical coupling induced by pilot collisions among users sharing the same pilot sequence.

{\bf{Correction Step.}} For each active user $k\in\mathcal{D}_t$, the channel state and covariance matrices are updated as:
\begin{align}
  {\bf{\widehat{h}}}_{t|t}^{d, (k)}&={\bf{\widehat{h}}}_{t|t-1}^{d, (k)}+{\bf{K}}_{t}^{(k)} ({\bf{y}}_{t}- \sum_{j \in D_{t}}{\bf{\widehat{h}}}_{t|t-1}^{d, (j)}),  \label{eq:11}\\
  {\bf{P}}_{t|t}^{d, (k, k)}&={\bf{P}}_{t|t-1}^{d, (k, k)} + {\bf{K}}_{t}^{(k)}{\boldsymbol{\Upsilon}}_t({\bf{K}}_{t}^{(k)})^{H} +{\bf{K}}_{t}^{(k)}{\bf{R}}_{t} ({\bf{K}}_{t}^{(k)})^{H} \nonumber \\
&-\sum_{j \in D_{t}}{\bf{P}}_{t|t-1}^{d, (k, j)}({\bf{K}}_{t}^{(k)})^{H}-{\bf{K}}_{t}^{(k)}\sum_{n \in D_{t}}{\bf{P}}_{t|t-1}^{d, (n, k)}, \label{eq:12} \\
{\bf{P}}_{t|t}^{d, (k, l)}&={\bf{P}}_{t|t-1}^{d, (k, l)}-\sum_{m \in D_{t}}{\bf{P}}_{t|t-1}^{d, (k, m)}({\bf{K}}_{t}^{(l)})^{H}\nonumber \\
&-{\bf{K}}_{t}^{(k)}\sum_{n \in D_{t}}{\bf{P}}_{t|t-1}^{d, (n, l)} +{\bf{K}}_{t}^{(k)}{\boldsymbol{\Upsilon}}_t({\bf{K}}_{t}^{(l)})^{H}, \label{eq:13}
   \end{align}
where ${\boldsymbol{\Upsilon}}_t\triangleq\sum_{i\in\mathcal D_t}\sum_{j\in\mathcal D_t}\mathbf P_{t|t-1}^{d,(i,j)}$, and  O-CA gain for an active user $k\in\mathcal{D}_t$ is $\mathbf{K}_t^{(k)}=\sum_{j\in\mathcal{D}_t}\mathbf{P}_{t|t-1}^{d, (k,j)}
\big({\boldsymbol{\Upsilon}}_t+\mathbf{R}_t\big)^{-1}$.

\begin{rmk}
 Under time-invariant state-space and observation-noise statistics, the O-CA covariance matrices and  the estimator gain are bounded, converge to steady-state matrices \cite{SF23}, and can therefore be precomputed offline.
\end{rmk}
The O-CA tracker estimates only the direct channels. The resulting estimates are stacked in
$\widehat{\mathbf H}^{d}_{t|t}$ and used as the multiuser direct-channel anchor in the differential semi-blind acquisition of Section~IV-A.

\section{Semi-Blind Cascaded-Channel Acquisition and Beam-Domain Tracking}

The O-CA tracker provides the direct-channel estimates required for semi-blind cascaded-channel acquisition. Each slot contains $S$ symbols and is modeled as one coherence block. When recalibration is triggered at slot $t$, following the $\tau_p$-symbol direct-channel pilot, all active users repeat the same unknown $\alpha$-symbol payload under one minimum-reflection baseline and $N$ DFT-coded IRS states. The remaining regular data interval has length $\beta=S-\tau_p-\alpha(N+1)>0$. Baseline subtraction removes the direct-channel contribution before cascaded-channel recovery, after which the element-wise posterior is projected onto the reduced beam-domain states of Section~IV-C. When recalibration is not triggered, the existing cascaded states are propagated by prediction only.

\subsection{Differential Repeated-Data Acquisition}

Let $ \mathbf X_t = \operatorname{row}_{k\in\mathcal D_t} \big\{ \mathbf x_t^{(k)}
\big\} \in \mathbb C^{|\mathcal D_t|\times\alpha}$ denote the unknown payload block transmitted by the active users.  Its rows are zero mean, mutually independent across users, and satisfy $\mathbb E[\mathbf X_t\mathbf X_t^{H}] =\alpha\sigma_x^2\mathbf I$.
The same realization of $\mathbf X_t$ is repeated over all acquisition configurations. During the baseline block, the IRS is placed in its minimum-reflection state, modeled as zero reflection in the acquisition implementation. With $\mathbf H_{t}^{d} = \big[ \mathbf h_{t}^{d,(k)} \big]_{k\in\mathcal D_t}$, the received baseline block is $\mathbf Y_{t,0}=\mathbf H_{t}^{d}\mathbf X_t+\mathbf N_{t,0}$. Let $\mathbf F_N = [\mathbf f_1,\ldots,\mathbf f_N] \in\mathbb C^{N\times N}$ be the unit-modulus DFT pattern matrix with $[\mathbf F_N]_{n,p} =\exp\!\big(-j\frac{2\pi(n-1)(p-1)}{N}\big)$ and $\mathbf F_N\mathbf F_N^{H}=N\mathbf I_N$. For coded state $p=1,\ldots,N$, define $\mathbf C_{t,p} =\big[\mathbf L_t^{(k)}\mathbf f_p\big]_{k\in\mathcal D_t}$. The corresponding received block is
\begin{equation}
\mathbf Y_{t,p} = \big( \mathbf H_{t}^{d}+\mathbf C_{t,p} \big)\mathbf X_t + \mathbf N_{t,p}, \quad p=1,\ldots,N.   \label{eq:14}
\end{equation}

The direct and cascaded channels are assumed constant over the pilot, baseline, and all $N$ coded acquisition blocks within slot $t$. Because the same unknown payload realization is reused in all configurations, subtracting the baseline removes the direct channel  at the observation level:
\begin{equation}
\Delta\mathbf Y_{t,p} \triangleq \mathbf Y_{t,p}-\mathbf Y_{t,0} = \mathbf C_{t,p}\mathbf X_t + \Delta\mathbf N_{t,p}, \label{eq:15}
\end{equation}
where $\Delta\mathbf N_{t,p}=\mathbf N_{t,p}-\mathbf N_{t,0}$. Thus, the direct-channel term is removed before cascaded-channel recovery, rather than by subtracting two separately estimated channels. Let $\widehat{\mathbf H}_{t|t}^{d} = \big[ \widehat{\mathbf h}_{t|t}^{d,(k)} \big]_{k\in\mathcal D_t}$, $\widehat{\mathbf{R}}_{t} \triangleq (\widehat{\mathbf H}_{t|t}^{d})^{H} \widehat{\mathbf H}_{t|t}^{d}$, and $\nu\triangleq\sigma_n^2/\sigma_x^2$. We assume $|\mathcal D_t|\le M$, and 
$\widehat{\mathbf H}_{t|t}^{d}$ has full column rank, so that $\widehat{\mathbf{R}}_{t}$ is nonsingular. The receiver-noise matrices are independent across acquisition
configurations, with independent columns distributed as $\mathcal{CN}(\mathbf 0,\sigma_n^2\mathbf I_M)$. A normalized empirical user-data covariance matrix is estimated
using the baseline observation:
\begin{align}
&\mathbf Z_{t,0} = (\widehat{\mathbf{R}}_{t})^{-1} (\widehat{\mathbf H}_{t|t}^{d})^{H} \mathbf Y_{t,0}, \nonumber\\
&\widehat{\mathbf S}_{X,t} = \frac{ \mathbf Z_{t,0}\mathbf Z_{t,0}^{H} }{
\alpha\sigma_x^2 } - \nu \widehat{\mathbf{R}}_{t}^{-1}. \label{eq:16}
\end{align}

Before inversion, $\widehat{\mathbf S}_{X,t}$ is Hermitian symmetrized and projected onto the positive-definite cone using a small eigenvalue floor for numerical stability. For coded state $p$, the cascaded response is estimated as
\begin{align}
\widehat{\mathbf C}_{t,p} = \Big( \frac{ \Delta\mathbf Y_{t,p} \mathbf Y_{t,0}^H \widehat{\mathbf H}_{t|t}^{d} }{ \alpha\sigma_x^2 } + \nu\widehat{\mathbf H}_{t|t}^{d} \Big) \widehat{\mathbf R}_{t}^{-1} \widehat{\mathbf S}_{X,t}^{-1}.\label{eq:17}
\end{align}
 For user $k$, collect $\widehat{\mathbf C}_{t}^{(k)} =\big[\widehat{\mathbf c}_{t,1}^{(k)},\ldots, \widehat{\mathbf c}_{t,N}^{(k)} \big]$.  The $\nu\widehat{\mathbf H}_{t|t}^{d}$ term compensates for the baseline-noise component contained in $\Delta\mathbf N_{t,p}=\mathbf N_{t,p}-\mathbf N_{t,0}$. Since $\mathbf C_t^{(k)}=\mathbf L_t^{(k)}\mathbf F_N$, the element-wise semi-blind estimate follows from the inverse DFT $\widehat{\mathbf L}_{t}^{\mathrm{SB},(k)}
=\frac{1}{N}\widehat{\mathbf C}_{t}^{(k)}\mathbf F_N^{H}$. 

\begin{rmk}
Although \eqref{eq:15} removes the instantaneous direct component, the O-CA estimate is still required to separate the repeated multiuser data through the baseline observation. Define the direct-anchor mixing matrix as $\mathbf A_t^{\mathrm{anc}} \triangleq \Big(
(\widehat{\mathbf H}_{t|t}^{d})^{H} \widehat{\mathbf H}_{t|t}^{d} \Big)^{-1} (\widehat{\mathbf H}_{t|t}^{d})^{H} \mathbf H_t^d$.  Assuming $\mathbf A_t^{\mathrm{anc}}$ is nonsingular, the large-sample, noise-free limit of the $p$th DFT-state estimate is $\widehat{\mathbf C}_{t,p} \rightarrow \mathbf C_{t,p} \big(\mathbf A_t^{\mathrm{anc}}\big)^{-1}$. Thus, direct-anchor mismatch produces a nonvanishing cross-user mixing floor even when finite-sample and receiver-noise errors vanish. For a perfect anchor, $\widehat{\mathbf H}_{t|t}^{d}=\mathbf H_t^d$, so that $\mathbf A_t^{\mathrm{anc}}=\mathbf I$ and the floor disappears.
\end{rmk}

\subsection{Semi-Blind Posterior Initialization}

For a recalibration slot $t_r$, define the semi-blind estimation error as $\widetilde{\mathbf L}_{t_r}^{(k)} \triangleq \mathbf L_{t_r}^{(k)} - \widehat{\mathbf L}_{t_r}^{\mathrm{SB},(k)}$. Its covariance is $\mathbf C_{\mathrm{SB}}^{(k)} \triangleq \mathbb E \big[ \operatorname{vec} ( \widetilde{\mathbf L}_{t_r}^{(k)}) \operatorname{vec}( \widetilde{\mathbf L}_{t_r}^{(k)} )^H \big] \in\mathbb C^{MN\times MN}$, where the expectation is taken  over the channel realizations, transmitted data, and receiver noise. The covariance $\mathbf C_{\mathrm{SB}} ^{(k)}$ is estimated from repeated acquisition trials and captures finite-block recovery error, receiver noise, direct-channel-anchor error, and the effect of the regularized data-covariance estimate. It is computed offline and kept fixed during online control.

After recalibration, the full-state posterior is initialized as $\widehat{\mathbf L}_{t_r|t_r}^{(k)} = \widehat{\mathbf L}_{t_r}^{\mathrm{SB},(k)}$, $\widehat{\mathbf s}_{t_r|t_r}^{(k)} = \operatorname{vec} \big( \widehat{\mathbf L}_{t_r}^{\mathrm{SB},(k)} \big)$, and  $\overline{\mathbf P}_{t_r|t_r}^{(k,k)} = \mathbf C_{\mathrm{SB}}^{(k)}$.
The effect of cross-user mixing on each user's marginal acquisition error is captured empirically by $\mathbf C_{\mathrm{SB}}^{(k)}$. Cross-user error cross-covariances are not propagated in the subsequent prediction stage.

Let $\{\mathcal N_u^{e}\}_{u=1}^{U}$ be a fixed ordered partition of the IRS into contiguous elementary units, independent of the dynamic control partition, such that
$\mathcal N_u^{e}\cap\mathcal N_v^{e}=\emptyset$ for $u\neq v$, and $\bigcup_{u=1}^{U}\mathcal N_u^{e}=\mathcal N$. Let $\mathcal V=\{\mathbf v_q\}_{q=1}^{Q}$ denote the finite IRS control codebook, whose phase quantization is specified in Section~V. Define $\widetilde{\mathbf v}_{u,q} \triangleq \mathbf 1_{\mathcal N_u^e}\odot\mathbf v_q$, which retains the entries of $\mathbf v_q$ over $\mathcal N_u^e$ and is zero elsewhere. The posterior beam-domain state is then ${\mathbf r}_{u,t_r|t_r}^{(k)}(q) = \widehat{\mathbf L}_{t_r}^{\mathrm{SB},(k)} \widetilde{\mathbf v}_{u,q} = \widehat{\mathbf L}_{t_r}^{\mathrm{SB},(k)} (:,\mathcal N_u^{e}) [\mathbf v_q]_{\mathcal N_u^{e}}$. With $\mathbf H_{u,q} \triangleq \widetilde{\mathbf v}_{u,q}^{T}\otimes\mathbf I_M$, the corresponding posterior error covariance is
\begin{align}
\mathbf P_{r,u,q, t_r|t_r}^{ (k,k)}& = \mathbf H_{u,q} \mathbf C_{\mathrm{SB}}^{(k)} \mathbf H_{u,q}^{H}\nonumber \\ 
&= \left( \widetilde{\mathbf v}_{u,q}^{T}\otimes\mathbf I_M \right) \mathbf C_{\mathrm{SB}}^{(k)} \left( \widetilde{\mathbf v}_{u,q}^{*}\otimes\mathbf I_M \right). \label{eq:18}
\end{align}
These projected covariances can be precomputed for the tracked unit-beam pairs, avoiding online manipulation of the full $MN\times MN$ covariance matrix.

For reference, define the full cascaded state as $\mathbf s_t^{(k)} \triangleq \operatorname{vec}(\mathbf L_t^{(k)}) \in \mathbb{C}^{MN \times 1}$ with dynamics $\mathbf s_{t+1}^{(k)} = \overline{\mathbf A}_t^{(k)} \mathbf s_t^{(k)} + \overline{\mathbf u}_t^{(k)}$, where $\overline{\mathbf u}_t^{(k)} \sim \mathcal{CN} ( \mathbf 0, \overline{\mathbf Q}_t^{(k)})$. The matrices $\mathbf A_t^{(k)}$ and $\mathbf Q_t^{(k)}\in\mathbb C^{MN\times MN}$ are identified offline from long-term cascaded-channel observations. Between recalibration events, the full-state predictor follows
\begin{equation}
\begin{aligned}
\widehat{\mathbf s}_{t+1|t}^{(k)}&= \overline{\mathbf A}_t^{(k)} \widehat{\mathbf s}_{t|t}^{(k)},\\
\overline{\mathbf P}_{t+1|t}^{(k,k)} &= \overline{\mathbf A}_t^{(k)} \overline{\mathbf P}_{t|t}^{(k,k)} \big(\overline{\mathbf A}_t^{(k)}\big)^H + \overline{\mathbf Q}_t^{(k)}.
\end{aligned} \label{eq:19}
\end{equation}

The predicted cascaded-channel matrix is $\widehat{\mathbf L}_{t+1|t}^{(k)} = \operatorname{unvec}_{M\times N} \big( \widehat{\mathbf s}_{t+1|t}^{(k)}
\big)$. This full-state recursion serves as an analytical reference; the reduced-order implementation used for large $N$ is described next. The projected covariance in \eqref{eq:18} provides the raw post-acquisition uncertainty model for each tracked unit-beam state. Section~VII-A introduces its slot-dependent calibrated counterpart, which is used to estimate the residual uncertainty associated with a candidate semi-blind recalibration.

\subsection{Low-Complexity Beam-Domain Tracking}

Propagating the full $MN$-dimensional cascaded state becomes costly for large $N$. We thus track reflected responses over the fixed elementary units introduced in Section IV-B. Let $\mathcal Q_t^{\mathrm{tr}}\subseteq\{1,\ldots,Q\}$ denote the set of beam indices whose reduced-order responses are tracked at slot $t$, and define $\mathbf r_{u,t}^{(k)}(q) = \mathbf L_t^{(k)} \widetilde{\mathbf v}_{u,q} =(\widetilde{\mathbf v}_{u,q}^{T}\otimes\mathbf I_M) \mathbf s_t^{(k)}\in\mathbb C^{M \times 1}$. Since projection of the full-state process does not generally produce a closed reduced-order model, the unit--beam dynamics are identified directly from historical responses:
\begin{equation}
\mathbf r_{u,t+1}^{(k)}(q)=\mathbf A_{u,q,t}^{(k)}\mathbf r_{u,t}^{(k)}(q)+
\mathbf e_{u,q,t}^{(k)},\label{eq:20}
\end{equation}
where $\mathbf e_{u,q,t}^{(k)} \sim \mathcal{CN} \!\big( \mathbf0, \mathbf Q_{r,u,q,t}^{(k)} \big)$. The transition matrix $\mathbf A_{u,q,t}^{(k)}$ and innovation covariance $\mathbf Q_{r,u,q,t}^{(k)}$ are estimated offline from historical unit-level beam responses and their one-step prediction residuals. The resulting predictor is
\begin{equation}
\begin{aligned}
\widehat{\mathbf r}_{u,t+1|t}^{(k)}(q)&= \mathbf A_{u,q,t}^{(k)} \widehat{\mathbf r}_{u,t|t}^{(k)}(q),\\ 
\mathbf P_{r,u,q,t+1|t}^{(k,k)} &= \mathbf A_{u,q,t}^{(k)} \mathbf P_{r,u,q,t|t}^{(k,k)}
\big(\mathbf A_{u,q,t}^{(k)}\big)^H + \mathbf Q_{r,u,q,t}^{(k)}. \end{aligned} \label{eq:21}
\end{equation}

For structured online implementation, the covariance sequences are precomputed over the admissible prediction ages, so that online tracking propagates only the \(M\)-dimensional mean vector and retrieves the corresponding prediction-error covariance from a lookup table. If the same beam set $\mathcal Q_t^{\mathrm{tr}}$ is tracked for all $U$ elementary units, the reduced state dimension per user is $MU|\mathcal Q_t^{\mathrm{tr}}|$, compared with $MN$ for the full cascaded state. Hence, dimensionality is reduced when $U|\mathcal Q_t^{\mathrm{tr}}|\ll N$. For a control partition $g$ composed of the elementary-unit index set $\mathcal U_{g,t}$, the predicted reflected response is $\widehat{\mathbf r}_{g,t+1|t}^{(k)}(q) =\sum_{u\in\mathcal U_{g,t}} \widehat{\mathbf r}_{u,t+1|t}^{(k)}(q)$. To avoid propagating cross-unit error covariances, prediction errors of distinct elementary units are assumed approximately uncorrelated. Accordingly, $\mathbf P_{r,g,q,t+1|t}^{(k,k)} \approx \sum_{u\in\mathcal U_{g,t}} \mathbf P_{r,u,q,t+1|t}^{(k,k)}$.

The same reduced-order state representation is used after recalibration and during prediction-only operation. Under recalibration, the semi-blind posterior of Section~IV-B is projected onto the tracked unit--beam states and used as the updated beam-domain posterior; otherwise, the existing posterior is propagated according to \eqref{eq:21}. The mode selection remains part of the joint control problem in Section~VI and is resolved by the structured recalibration rule in Section~VII-A.

\section{Beam-Dependent Predicted Effective Channel}

To account for practical finite-resolution hardware, the IRS phase configuration is selected as $\mathbf v_q=[e^{j\psi_{1,q}},\ldots,e^{j\psi_{N,q}}]^T$ for $q=1,\ldots,Q$. For $b$-bit phase quantization, the phase of element $n$ is restricted to $\psi_{n,q}\in \big\{ \frac{2\pi m}{2^b}:m=0,\ldots,2^b-1 \big\}$, where at most $2^{bN}$ full-surface configurations are possible, and the codebook may contain any subset of $Q\le 2^{bN}$ configurations.  These unit-modulus codewords are used during normal reflection and are distinct from the minimum- reflection acquisition state of Section~IV-A.

Let \(\mathcal N_{g,t}\subseteq\mathcal N\) denote the element set of IRS unit  \(g\) with \(n_{g,t}\triangleq|\mathcal N_{g,t}|\). The ordered sets $\{\mathcal N_{g,t}\}_{g=1} ^{G_t}$ form the current IRS partition. For unit \(g\), let \(\mathbf v_{g,q}\triangleq [\mathbf v_q]_{\mathcal N_{g,t}}\) denote the restriction of codeword \(\mathbf v_q\) to that unit. The executable controller uses the equal-size candidate partitions defined in Section~VII-B. Under the intended-user dominant-reflection model, only the IRS units assigned to user \(k\) are retained in the controller-side effective channel, while the reflected leakage from units assigned to other users is neglected. This approximation is most accurate when unit-level beam focusing suppresses the unintended reflected components relative to the desired component. For the beam-index vector \(\mathbf q_t=[q_{1,t},\ldots,q_{G_t,t}]\) and assignment matrix \(\mathbf B_t=[a_{g,t}^{(k)}]\in\{0,1\}^{G_t\times K}\), the one-step predicted effective channel is 
\begin{equation}
 \widehat{\mathbf h}_{t+1|t}^{\mathrm{eff},(k)} (\mathbf B_t,\mathbf q_t) \approx \widehat{\mathbf h}_{t+1|t}^{d,(k)} + \sum_{g=1}^{G_t} a_{g,t}^{(k)} \widehat{\mathbf r}_{g,t+1|t}^{(k)}(q_{g,t}), \label{eq:22}
 \end{equation}
 where $\widehat{\mathbf r}_{g,t+1|t}^{(k)}(q) = \sum_{u\in\mathcal U_{g,t}} \widehat{\mathbf r}_{u,t+1|t}^{(k)}(q)$ is obtained from the reduced-order predictor in Section IV-C.

In the beam-domain, the predicted effective channel is directly formed from the aggregated unit-level responses, and its covariance is obtained from the corresponding reduced-order prediction covariances.  The user-specific masked codeword $\widetilde{\mathbf v}_t^{(k)} \triangleq \operatorname{col}_{g=1}^{G_t} \big( a_{g,t}^{(k)} \mathbf v_{g,q_{g,t}} \big)\in\mathbb C^{N}$ is the full-length codeword whose entries corresponding to IRS units not assigned to user $k$ are set to zero.  Applying the identity \(\mathbf L\mathbf v=(\mathbf v^T\otimes\mathbf I_M)\operatorname{vec}(\mathbf L)\), the predicted effective channel error is
\begin{equation}
\Delta\mathbf h_{t+1|t}^{\mathrm{eff},(k)} =\Delta\mathbf h_{t+1|t}^{d,(k)}
+ \big(\widetilde{\mathbf v}_t^{(k)T} \otimes\mathbf I_M \big)
\Delta\mathbf s_{t+1|t}^{(k)}.\label{eq:23}
\end{equation}

Define $\Delta\mathbf h_{t+1|t}^{d,(k)} \triangleq \mathbf h_{t+1}^{d,(k)} - \widehat{\mathbf h}_{t+1|t}^{d,(k)}$ and $\Delta\mathbf s_{t+1|t}^{(k)} \triangleq \operatorname{vec} \big( \mathbf L_{t+1}^{(k)} - \widehat{\mathbf L}_{t+1|t}^{(k)} \big)$. Conditioned on the available information at slot $t$, and assuming that the direct and cascaded-channel prediction errors are zero mean and mutually uncorrelated, the effective-channel covariance is then
\begin{equation}
\mathbf P_{t+1|t}^{\mathrm{eff},(k,k)} (\mathbf B_t,\mathbf q_t) = \mathbf P_{t+1| t}^{d,(k,k)} + \sum_{g=1}^{G_t} a_{g,t}^{(k)} \mathbf P_{r,g,q_{g,t},t+1|t}^{(k,k)}, \label{eq:24}
\end{equation}
The effective-channel uncertainty, defined as $ \eta_t^{(k)}(\mathbf B_t,\mathbf q_t) \triangleq \operatorname{Tr} \big(\mathbf P_{t+1|t}^{\mathrm{eff},(k,k)} (\mathbf B_t,\mathbf q_t) \big)$, quantifies the total mean-square predicted effective channel error across the BS antennas.

\section{Problem Formulation}

The slot-$t$ control problem jointly selects the recalibration mode and the remaining IRS and radio-resource variables. For a candidate recalibration decision $a_t^{\mathrm{cal}}\in\{0,1\}$, let $\widehat{\mathbf h}_{t+1|t}^{\mathrm{eff},(k),\mathrm{ctrl}} (\mathbf B_t,\mathbf q_t;a_t^{\mathrm{cal}})$ and $\mathbf P_{t+1|t}^{\mathrm{eff},(k,k),\mathrm{ctrl}} (\mathbf B_t,\mathbf q_t;a_t^{\mathrm{cal}})$ denote the effective-channel mean and covariance obtained from the posterior associated with the candidate recalibration mode.

{\emph{Predicted SINR and Net Rate:}} Let $p^{(k)}$ denote the transmit power of user $k$, and let $\mathbf w_{t+1|t}^{(k)}\in\mathbb C^M$ denote the receive combiner constructed from the control-conditioned predicted effective channels, and let $\mathbf c_t
\triangleq \big[c_t^{(k)}\big]_{k\in\mathcal D_t}$. The simulations use zero-forcing (ZF) combining. For a candidate action $(\mathbf B_t,\mathbf q_t,\mathbf c_t, a_t^{\mathrm{cal}})$, the uncertainty-aware predicted SINR in \eqref{eq:25} accounts for channel-prediction errors as additional uncorrelated interference.
\begin{figure*}[t!]
\normalsize
\begin{equation}
\widehat{\mathrm{SINR}}_{t+1|t}^{(k)}= \frac{ p^{(k)} \big| (\mathbf w_{t+1| t}^{(k)})^{H} \widehat{\mathbf h}_{t+1|t}^{\mathrm{eff},(k),\mathrm{ctrl}}\big|^2
}{\begin{aligned}  \sigma_n^2 \big\|\mathbf w_{t+1|t}^{(k)}\big\|_2^2&+p^{(k)}
(\mathbf w_{t+1|t}^{(k)})^{H}\mathbf P_{t+1|t}^{\mathrm{eff}, (k,k),\mathrm{ctrl}}
\mathbf w_{t+1|t}^{(k)}\\
&+\displaystyle \sum_{\substack{j\in\mathcal D_t\setminus\{k\}\\  c_t^{(j)}=c_t^{(k)} }} p^{(j)}\Big[\big|(\mathbf w_{t+1|t}^{(k)})^{H}\widehat{\mathbf h}_{t+1|t}^{\mathrm{eff},(j),\mathrm{ctrl}}\big|^2+(\mathbf w_{t+1|t}^{(k)})^{H}\mathbf P_{t+1|t}^{\mathrm{eff},(j,j),\mathrm{ctrl}} \mathbf w_{t+1|t}^{(k)} \Big]\end{aligned}} \label{eq:25}
\end{equation}
\hrulefill
\end{figure*}

The action arguments of $\widehat{\mathrm{SINR}}_{t+1|t}^{(k)}$ are suppressed hereafter for compactness. The corresponding predicted rate over the regular data interval is
\begin{equation}
\widehat R_{t+1|t}^{\mathrm N,(k)} = B_{\mathrm{ch}} \log_2 \big( 1+\widehat{\mathrm{SINR}}_{t+1|t}^{(k)} \big), \label{eq:26} 
\end{equation}
where $B_{\mathrm{ch}}$ is the bandwidth of one uplink channel. As the same unknown $\alpha$-symbol payload block is reused over the baseline and the $N$ DFT-coded acquisition states, the coded repetitions carry no additional fresh payload. Hence, the net block rate is
\begin{equation}
\begin{aligned}
\widehat R_t^{\mathrm{net},(k)} ={}& (1-a_t^{\mathrm{cal}}) \frac{S-\tau_p}{S}
\widehat R_{t+1|t}^{N,(k)} \\
&+ a_t^{\mathrm{cal}} \frac{ \alpha \widehat R_{t+1| t}^{\mathrm{base},(k)} +\beta \widehat R_{t+1|t}^{N,(k)} }{S},
\end{aligned}\label{eq:net_rate}
\end{equation}
where $\beta=S-\tau_p-\alpha(N+1)>0$, and $\widehat R_{t+1|t}^{\mathrm{base},(k)}$ denotes the predicted rate under the minimum-reflection baseline. Thus, $\widehat R_t^{\mathrm{net},(k)}$ is a one-step-ahead prediction-based utility that directly accounts for the acquisition overhead incurred by $a_t^{\mathrm{cal}}$.

{\emph{Fairness and Outage Metrics:}} Define the predicted instantaneous delivered rate as  $\widetilde R_t^{(k)} \triangleq \mathbf 1_{\{k\in\mathcal D_t\}} \widehat R_t^{\mathrm{net},(k)}$. The candidate-dependent smoothed rate is $\widehat{\overline R}_{t+1|t}^{(k)} = (1-\alpha_R)\overline R_t^{(k)} + \alpha_R\widetilde R_t^{(k)}$, where $0<\alpha_R<1$. The regularized Jain fairness index is $\widehat J_{t+1|t}^{\mathrm{fair}} \triangleq \frac{ \big( \sum_{k=1}^{K} \widehat{\overline R}_{t+1|t}^{(k)} \big)^2 }{ K\sum_{k=1}^{K} \big( \widehat{\overline R}_{t+1|t}^{(k)} \big)^2 +\epsilon_J }$, where $\epsilon_J>0$. Inactive users contribute zero instantaneous rate. For a target rate $R_{\min}>0$, the predicted outage count is $\widehat N_t^{\mathrm{out}} \triangleq \sum_{k\in\mathcal D_t} \mathbf 1 \big\{ \widehat R_t^{\mathrm{net},(k)}<R_{\min} \big\}$. The dependence of these metrics on the candidate action, including $a_t^{\mathrm{cal}}$, is omitted for notational simplicity.

{\emph{Joint Control Problem:}} We formulate the equal-size partition family used by the executable controller. Let $\mathcal S_N \subseteq \left\{ n\in\mathbb Z_{>0}: n\mid N,\; \Delta_N\mid n \right\}$ contain the divisors of $N$ that are compatible with the elementary tracking grid of Section~IV-B. For $N_{0,t}\in\mathcal S_N$, define $G_t=N/N_{0,t}$ contiguous equal-size units, $\mathcal N_{g,t} = \{ (g-1)N_{0,t}+1,\ldots,gN_{0,t} \}, \qquad g=1,\ldots,G_t$. The full IRS reflection vector and phase-switching cost are $\boldsymbol\nu_t = \operatorname{col}_{g=1}^{G_t} ( \mathbf v_{g,q_{g,t}})$ and  $C_t^{\mathrm{sw}} = \sum_{n=1}^{N} \mathbf 1 \left\{ [\boldsymbol\nu_t]_n \neq
[\boldsymbol\nu_{t-1}]_n \right\}$, respectively. The slot-$t$ control problem is
\begin{subequations}
\begin{align}
\max_{\substack{N_{0,t}\in\mathcal S_N,\\\mathbf q_t,\mathbf c_t,\mathbf B_t,a_t^{\mathrm{cal}} }}& \sum_{k\in\mathcal D_t} \widehat R_{t}^{\mathrm{net}, (k)} +\lambda_F\widehat J_{t+1|t}^{\mathrm{fair}} -\lambda_O\widehat N_t^{\mathrm{out}} -\lambda_{\mathrm{sw}}C_t^{\mathrm{sw}}\nonumber\\ 
&-\lambda_U \sum_{k\in\mathcal D_t} \operatorname{Tr} \bigl( \mathbf P_{t+1| t}^{\mathrm{eff},(k,k),\mathrm{ctrl}} (\mathbf B_t,\mathbf q_t;a_t^{\mathrm{cal}}) \bigr) \label{eq:28a}\\
\mathrm{s.t.}\quad&q_{g,t}\in\mathcal Q_t^{\mathrm{tr}}, \quad g=1,\ldots,G_t,\label{eq:28b}\\
&c_t^{(k)}\in\mathcal C,\quad k\in\mathcal D_t,\label{eq:28c}\\
&a_t^{\mathrm{cal}}\in\{0,1\},\quad a_{g,t}^{(k)}\in\{0,1\},\label{eq:28d}\\
&a_{g,t}^{(k)}=0, \quad g=1,\ldots,G_t,\; k\notin\mathcal D_t ,\label{eq:28e}\\
&\sum_{k\in\mathcal D_t} a_{g,t}^{(k)}=1,\quad g=1,\ldots,G_t,\label{eq:28f}\\
&G_t=\frac{N}{N_{0,t}},\label{eq:28g}\\
&\sum_{\tau=\max\{1,t-W+1\}}^{t}N_{0,\tau}\sum_{g=1}^{G_\tau}a_{g,\tau}^{(k)}
\geq s_{\min}^{(k)},\quad k\in\mathcal D_t,\label{eq:28h}
\end{align}
\end{subequations}
where $a_{g,t}^{(k)}=1$ indicates that IRS unit $g$ is assigned to user $k$, so that $N_t^{(k)} = N_{0,t}  \sum_{g=1}^{G_t}a_{g,t}^{(k)}$ is the corresponding allocated aperture. Consistent with Section~VII-A, if an active user has no initialized cascaded-channel state, the feasible action set is restricted to $a_t^{\mathrm{cal}}=1$.

Constraint~\eqref{eq:28h} prevents persistent IRS-service starvation by enforcing a minimum cumulative element allocation over the latest $W$ slots. The thresholds $s_{\min}^{(k)}$ are chosen to preserve assignment feasibility, with the corresponding slot-wise condition given in Proposition~\ref{prop:irs_feasibility}. The service constraint regulates IRS access, whereas $\widehat J_{t+1|t}^{\mathrm{fair}}$ controls long-term rate fairness. The nonnegative coefficients  $\lambda_F$, $\lambda_O$, $\lambda_{\mathrm{sw}}$, and $\lambda_U$ define a reference scalarization of the coupled design objectives. They remain fixed during the testing. The resulting problem is temporally coupled, mixed-integer, and nonconvex, with a combinatorial joint action space. Thus, it is not solved directly by the executable controller.

\section{Uncertainty-Aware Structured Control}
\label{sec:structured_control}

The coupled variables in Section~VI are resolved through a deterministic structured procedure based on the channel statistics and uncertainty-aware physical metrics developed in Sections~III-VI. The controller first selects the recalibration mode and then, conditioned on this decision, determines the equal-size IRS granularity, unit ownership, and beam indices. Uplink-channel allocation is subsequently performed by the receiver-consistent procedure of Section~IX.

\subsection{Recalibration and IRS Feasibility}

To avoid solving the combinatorial resource problem in \eqref{eq:28a} for both recalibration modes, $a_t^{\mathrm{cal}}$ is determined before the remaining resource variables. If any active user has no initialized cascaded-channel state, semi-blind acquisition is mandatory and $a_t^{\mathrm{cal}}=1$. Otherwise, the following risk test is applied to the initialized active users. For each \(k\in\mathcal D_t\), let \(\widehat{\mathbf r}_{u,t}^{\mathrm{pre},(k)}(q)\) and \(\mathbf P_{r,u,q,t}^{\mathrm{pre},(k,k)}\) denote the beam-domain mean and covariance available immediately before the recalibration decision. Define
\begin{align}
U_{\mathrm{pre},t}^{(k)} \triangleq  (D_t^{(k)})^{-1}\displaystyle \sum_{u=1}^{U} \sum_{q\in\mathcal Q_t^{\mathrm{tr}}} \operatorname{Tr} \big( \mathbf P_{r,u,q,t}^{\mathrm{pre},(k,k)} \big) . \label{eq:pre_recal_risk_user}
\end{align}
where $D_t^{(k)} \triangleq \sum_{u=1}^{U} \sum_{q\in\mathcal Q_t^{\mathrm{tr}}} \big(\| \widehat{\mathbf r}_{u,t}^{\mathrm{pre},(k)}(q) \|_2^2 +\varepsilon_{\mathrm{cas}} \big)$ and $\varepsilon_{\mathrm{cas}}>0$ ensures numerical stability at low reflected-channel energy.  Let $\mathbf P_{0,u,q}^{(k,k)}$ denote the raw projected post-acquisition covariance in \eqref{eq:18}. Its calibrated slot-dependent counterpart is
\begin{equation}
\mathbf P_{\mathrm{floor},u,q,t}^{(k,k)} \triangleq \mathcal C_k\!\Big(10\log_{10} \big( \max \big\{ \operatorname{Tr} ( \mathbf P_{t|t}^{d,(k,k)} ), \varepsilon_d  \big\} \big)\Big)\mathbf P_{0,u,q}^{(k,k)}, \label{eq:conditional_floor_scale}
\end{equation}
where $\varepsilon_d>0$ and \(\mathcal C_k(\cdot)\) is obtained offline from user-specific calibration data. Online queries are clipped to the calibrated range, interpolated in dB, and converted to a linear covariance scale. The predicted post-acquisition covariance-floor risk is
\begin{equation}
U_{\mathrm{floor},t}^{(k)}\triangleq (D_t^{(k)})^{-1}\displaystyle \sum_{u=1}^{U} \sum_{q\in\mathcal Q_t^{\mathrm{tr}}} \operatorname{Tr} \big( \mathbf P_{\mathrm{floor},u,q,t}^{(k,k)} \big). \label{eq:post_recal_risk_user}
\end{equation}

Define the worst-user system risks as $U_{\mathrm{pre},t} \triangleq \max_{k\in\mathcal D_t} U_{\mathrm{pre},t}^{(k)}$ and $U_{\mathrm{floor},t} \triangleq \max_{k\in\mathcal D_t} U_{\mathrm{floor},t}^{(k)}$. For the dB  reliability threshold $\varepsilon_{\mathrm{cal}}^{\mathrm{dB}}$, let $\varepsilon_{\mathrm{cal}} \triangleq 10^{\varepsilon_{\mathrm{cal}} ^{\mathrm{dB}}/10}$.  The recalibration decision is
\begin{equation}
a_t^{\mathrm{cal}}=\mathbf 1 \left\{ U_{\mathrm{pre},t}>\varepsilon_{\mathrm{cal}}
\;\land\; U_{\mathrm{pre},t}>U_{\mathrm{floor},t} \right\}, \label{eq:actionable_recalibration}
\end{equation}
The reliability test $U_{\mathrm{pre},t}>\varepsilon_{\mathrm{cal}}$ identifies an unreliable prediction, whereas $U_{\mathrm{pre},t}>U_{\mathrm{floor},t}$ requires a positive predicted uncertainty reduction. If $a_t^{\mathrm{cal}}=1$, the semi-blind acquisition of
Section~IV-A refreshes the slot-$t$ beam-domain posterior according to Section~IV-B; otherwise, the prediction-only posterior is retained. The selected posterior is used for the subsequent resource decisions and as the state reference for the next decision epoch.

Let \(\mathcal{S}_N\) denote the admissible set of equal IRS-unit sizes. For a candidate \(n_0\in\mathcal{S}_N\), the IRS contains $G(n_0)=\frac{N}{n_0}$ contiguous units,  $\mathcal{N}_g(n_0) =\left\{(g-1)n_0+1,\ldots,gn_0\right\},\qquad g=1,\ldots,G(n_0)$.
Since every admissible $n_0$ is compatible with the fixed elementary tracking grid of Section~IV-C, these control units are formed by aggregating elementary unit-beam states and require no new cascaded-channel estimation.  Let \(\mathcal H_t^{W-1}\) denote the indices of the most recent \(W-1\) nonempty controller blocks preceding block \(t\).  The residual aperture deficit of active user $k$ is
\begin{equation}
d_t^{(k)} \triangleq \big[ s_{\min}^{(k)} - \sum_{\tau\in\mathcal H_t^{W-1}} N_{\tau}^{(k)} \big]^+, \quad k\in\mathcal D_t. \label{eq:service_deficit}
\end{equation}
where $[x]^+\triangleq\max\{x,0\}$ and $N_{\tau}^{(k)}$ is the previously allocated IRS aperture defined in Section~VI.  For candidate size $n_0$, the minimum number of units required by user $k$ is $n_{\mathrm{req},t}^{(k)}(n_0) \triangleq \Big\lceil \frac{d_t^{(k)}}{n_0} \Big\rceil$. Hence, the feasible candidate-size set is 
\begin{equation}
\mathcal{S}_{N,t}^{\mathrm{feas}} =\Big\{ n_0\in\mathcal{S}_N: \sum_{k\in\mathcal{D}_t} n_{\mathrm{req},t}^{(k)}(n_0) \leq G(n_0) \Big\}.\label{eq:feasible_unit_sizes}
\end{equation}

For $\mathcal S_{N,t}^{\mathrm{feas}}\neq\varnothing$, \eqref{eq:feasible_unit_sizes} is necessary and sufficient for an exclusive equal-size assignment satisfying the current residual service requirements. Its sequential preservation is proved in Section~X-A. Infeasible epochs, if any, are handled explicitly in the numerical evaluation.   For a control unit formed from several elementary units, the predicted mean responses are summed, whereas its acquisition-time uncertainty uses the directly calibrated contiguous-block covariance of Section~IV-C rather than a sum of elementary post-acquisition covariances.

\subsection{IRS Configuration Selection}

For a fixed $n_0\in\mathcal S_{N,t}^{\mathrm{feas}}$, IRS ownership is determined from an  isolated system-rate contribution. Let $\mathcal Q_t^{\mathrm{cand}}\subseteq\mathcal Q_t^{\mathrm{tr}}$ denote the beam-codeword set considered by the controller.  Since uplink channels have not yet been assigned, each IRS candidate is evaluated with all active users treated as a common provisional cochannel set, and the IRS-ranking metric is
\begin{equation}
\widehat{\mathcal R}_t(\mathbf B,\mathbf q) \triangleq \sum_{k\in\mathcal D_t}
B_{\mathrm{ch}} \log_2 \Big( 1+ \widehat{\mathrm{SINR}}_{t+1|t}^{\mathrm{IRS},(k)} (\mathbf B,\mathbf q) \Big), \label{eq:irs_reference_sumrate}
\end{equation}
where $\widehat{\mathrm{SINR}}_{t+1|t}^{\mathrm{IRS},(k)}$ is obtained from \eqref{eq:25} using the corresponding control-conditioned ZF receiver.  This metric is used only for IRS ranking; the channel-dependent rates are recomputed in Section~IX. Since $a_t^{\mathrm{cal}}$  is fixed before IRS optimization, the common acquisition-overhead factor does not affect this ranking.

Let $\widehat{\mathcal R}_{t,g\rightarrow k}^{\mathrm{iso}}(q;n_0)$ denote \eqref{eq:irs_reference_sumrate} evaluated when only unit $g$ of partition $n_0$ contributes to the reflected channel of user $k$ and uses beam $q$; all other reflected contributions are masked out. Define
\begin{align}
\Delta\widehat{\mathcal R}_{t,g}^{(k)}(n_0)&\triangleq \max_{q\in\mathcal Q_t^{\mathrm{cand}}} \widehat{\mathcal R}_{t,g\rightarrow k}^{\mathrm{iso}}(q;n_0)
- \widehat{\mathcal R}_t^{\mathrm{dir}}, \label{eq:isolated_unit_score} 
\\ q_{t,g}^{\mathrm{iso},(k)}(n_0)  &\in \arg\max_{q\in\mathcal Q_t^{\mathrm{cand}}}
\widehat{\mathcal R}_{t,g\rightarrow k}^{\mathrm{iso}}(q;n_0), \label{eq:isolated_beam}
\end{align}
where $\widehat{\mathcal R}_t^{\mathrm{dir}}$ denotes \eqref{eq:irs_reference_sumrate} with all reflected contributions removed. Thus, \(\Delta\widehat{\mathcal R}_{t,g}^{(k)}\) measures the predicted system-rate change produced by assigning the considered unit to user \(k\) under its best isolated beam.

The units are assigned sequentially. Let $n_{t,g-1}^{(j)}(n_0) = \sum_{\ell=1}^{g-1}
\mathbf 1 \big\{ k_{\ell,t}^{\star}(n_0)=j \big\}$ denote the number of units already assigned to user \(j\) before processing unit \(g\). If the current unit were assigned to candidate user \(k\), the residual demand of user $j$ would be
\begin{equation}
\delta_{t,g}^{(j)}(k;n_0)\triangleq\left[n_{\mathrm{req},t}^{(j)}(n_0) - n_{t,g-1}^{(j)}(n_0) - \mathbf 1\{j=k\} \right]^+ .\label{eq:residual_unit_demand}
\end{equation}
The residual-feasible owner set is therefore
\begin{equation}
\mathcal{K}_{t,g}^{\mathrm{feas}}(n_0)=\Big\{k\in\mathcal{D}_t:\sum_{j\in\mathcal{D}_t} \delta_{t,g}^{(j)}(k;n_0) \leq G(n_0)-g \Big\}. \label{eq:feasible_owner_set}
\end{equation}
The owner of unit \(g\) is selected as
\begin{equation}
k_{g,t}^{\star}(n_0) \in\arg\max_{k\in \mathcal{K}_{t,g}^{\mathrm{feas}}(n_0)} \Delta\widehat{\mathcal R}_{t,g}^{(k)}(n_0). \label{eq:deterministic_owner}
\end{equation}
Units are processed in the fixed order $g=1,\ldots,G(n_0)$, with ties in the isolated-beam and owner selections resolved by the smallest beam and user indices, respectively. Condition \eqref{eq:feasible_owner_set} preserves residual service feasibility, as established in Section~X-A.

The resulting owner indices define $[\widehat{\mathbf B}_t(n_0)]_{g,k} =\mathbf 1\{k=k_{g,t}^{\star}(n_0)\}$, and the corresponding beam initialization is
\begin{equation}
q_{g,t}^{(0)}(n_0)=q_{g,t}^{\mathrm{iso},\left(k_{g,t}^{\star}(n_0)\right)}(n_0),
\quad g=1,\ldots,G(n_0). \label{eq:isolated_beam_initialization}
\end{equation}
The beam vector is then refined by one fixed-order coordinate sweep:
\begin{equation}
q_{g,t}^{(1)}(n_0) \in \arg\max_{q\in\mathcal Q_t^{\mathrm{cand}}} \widehat{\mathcal R}_t \big( \widehat{\mathbf B}_t(n_0), \mathbf q_{t,g}^{(q)}(n_0) \big). \label{eq:coordinate_beam_refinement}
\end{equation}
where
\[
\begin{aligned}
\mathbf q_{t,g}^{(q)}(n_0) = \big[& q_{1,t}^{(1)}(n_0),\ldots, q_{g-1,t}^{(1)}(n_0), q, \\
& q_{g+1,t}^{(0)}(n_0),\ldots, q_{G(n_0),t}^{(0)}(n_0) \big]^T.
\end{aligned}
\] Ties are resolved by the smallest maximizing beam index. Since the incumbent beam belongs to \(\mathcal Q_t^{\mathrm{cand}}\), each coordinate update is nondecreasing in \(\widehat{\mathcal R}_t\). 

After the sweep, define $\widehat{\mathbf q}_t(n_0) =[q_{1,t}^{(1)}(n_0),\ldots,q_{G(n_0),t}^{(1)}(n_0)]^T$.  The final equal-size granularity is
\begin{equation}
N_{0,t}^{\star} = \min \arg\max_{n_0\in\mathcal S_{N,t}^{\mathrm{feas}}} \widehat{\mathcal R}_t \big( \widehat{\mathbf B}_t(n_0), \widehat{\mathbf q}_t(n_0) \big), \label{eq:final_unit_size_selection}
\end{equation} 
where the IRS candidates are ranked by the uncertainty-aware rate surrogate in \eqref{eq:irs_reference_sumrate}, with service feasibility enforced by \eqref{eq:feasible_unit_sizes} and \eqref{eq:feasible_owner_set}. The selected IRS variables are $\mathbf B_t^{\star} = \widehat{\mathbf B}_t(N_{0,t}^{\star})$ and $\mathbf q_t^{\star} = \widehat{\mathbf q}_t(N_{0,t}^{\star})$, yielding $\mathcal A_t^{\mathrm{IRS}} = (a_t^{\mathrm{cal}},N_{0,t}^{\star}, \mathbf B_t^{\star},\mathbf q_t^{\star})$.
The final uplink-channel vector $\mathbf c_t^{\star}$ is then obtained by the receiver-consistent sequential allocator in Section~IX.

\section{Full-Slot Deterministic Controller Execution}

The structured controller of Sections~VII-IX is executed once per nonempty physical control block. All decisions are causal and use only controller-available channel statistics, the IRS-service history, and parameters fixed during development. A recalibration block remains an active control block: when semi-blind acquisition is selected, the cascaded-channel posterior is first refreshed and the resulting posterior is then used to determine the IRS and uplink-channel actions for the remaining data interval. Thus, both normal and recalibration blocks advance the service and delivered-rate histories. Consistent with Section~IV, one physical control block corresponds to one control slot.

\subsection{Control-State and Resource Selection}

At the beginning of block $t$, the O-CA tracker updates the direct-channel posterior and the prediction-only beam-domain states. The controller then selects $a_t^{\mathrm{cal}}$ according to \eqref{eq:actionable_recalibration}, with recalibration being mandatory for any active user without an initialized cascaded state. If $a_t^{\mathrm{cal}}=1$, the semi-blind posterior is refreshed and projected onto the tracked unit-beam states before the remaining resource decisions are made.

Let \(\mathcal P_t^{\mathrm{ctrl}}\) denote the selected control-conditioned channel statistics. Subsequent resource decisions use the refreshed channel posterior when \(a_t^{\mathrm{cal}}=1\) and the prediction-only statistics otherwise. Let \(\mathcal P_t^{\mathrm{acq}}\) and \(\mathcal P_t^{\mathrm{pred}}\) denote the control-conditioned channel statistics obtained after the semi-blind refresh and prediction-only channel statistics, respectively. The selected collection provides the effective-channel means and error covariances used by the subsequent IRS and uplink-channel decisions.
Conditioned on $\mathcal P_t^{\mathrm{ctrl}}$, Section~VII-B selects $N_{0,t}^{\star}$, $\mathbf B_t^{\star}$, and $\mathbf q_t^{\star}$, while Section~IX determines $\mathbf c_t^{\star}$. The executed action is $\mathcal A_t=
\left( a_t^{\mathrm{cal}}, N_{0,t}^{\star},  \mathbf B_t^{\star}, \mathbf q_t^{\star}, \mathbf c_t^{\star} \right)$. 

\subsection{Block Execution and State Updates}

For \(a_t^{\mathrm{cal}}=0\), the selected IRS and uplink-channel configuration is applied over the regular data interval following the pilot. For \(a_t^{\mathrm{cal}}=1\), the minimum-reflection baseline and the \(N\) DFT-coded repeated-data states are first used for cascaded-channel acquisition, after which the selected action is applied over the remaining \(\beta\)-symbol regular-data interval.

The predicted net block rate is evaluated using \eqref{eq:net_rate}, so the acquisition overhead is charged to the same physical block in which recalibration is performed. In particular, only one realization of the repeated unknown \(\alpha\)-symbol payload is counted as fresh data; the DFT-coded repetitions provide channel-acquisition observations but no additional fresh payload. Thus, recalibration modifies both the channel-state accuracy and the usable payload duration, and no additional heuristic recalibration penalty is required. The corresponding numerical payload fractions are given in Section~XII.

Let $G_t^{\star}=\frac{N}{N_{0,t}^{\star}}$. The executed IRS aperture assigned to user \(k\) is $N_t^{(k)} = N_{0,t}^{\star} \sum_{g=1}^{G_t^{\star}} [\mathbf B_t^{\star}] _{g,k}$. After each nonempty block, \(N_t^{(k)}\) is appended once to the \(W\)-block IRS-service history in \eqref{eq:service_deficit}, irrespective of the recalibration mode. Thus, a recalibration block contributes one service sample using the assignment executed during its post-acquisition data interval.

Let \(R_t^{\mathrm{del},(k)}\) denote the realized net block rate obtained using the same payload accounting as \eqref{eq:net_rate}, with \(R_t^{\mathrm{del},(k)}=0\) for \(k\notin\mathcal D_t\). Before the first available delivered-rate sample,
\(\overline R_t^{(k)}=0\). Thereafter, the smoothed rate is updated as $\overline R_{t+1}^{(k)} = (1-\alpha_R)\overline R_t^{(k)} + \alpha_R R_t^{\mathrm{del},(k)}, \qquad 0<\alpha_R<1$. Once initialized, an inactive user's rate state decays through a zero delivered-rate observation.  Both the service and rate states are updated only after execution of the current action and therefore cannot introduce noncausal information into the slot-\(t\) decision.

Algorithm~\ref{alg:deterministic_controller} summarizes the complete online procedure. It is stated for \(\mathcal S_{N,t}^{\mathrm{feas}}\neq\varnothing\). If \(\mathcal S_{N,t}^{\mathrm{feas}}=\varnothing\), the implementation sets \(n_0=\min\mathcal S_N\), gives mandatory priority to active users with the largest residual service deficits, and applies the same deterministic ownership and beam-selection rules.

\begin{figure}[t!]
\algorithmheading{Full-Slot Uncertainty-Aware Structured Control}
\label{alg:deterministic_controller}
\begin{algorithmic}[1]
\Require  \(\mathcal D_t\); direct- and cascaded-channel posteriors; IRS-service history; elementary IRS grid; beam codebook; \(\mathcal S_N\); fixed calibration parameters.
\Ensure \(\mathcal A_t= (a_t^{\mathrm{cal}}, N_{0,t}^{\star}, \mathbf B_t^{\star}, \mathbf q_t^{\star}, \mathbf c_t^{\star})\).
\State Update the direct-channel posterior and form the predicted beam-domain cascaded states and covariances.
\If{an active user has no initialized cascaded state}
    \State \(a_t^{\mathrm{cal}}\gets1\).
\Else
    \State Compute \(U_{\mathrm{pre},t}\) and \(U_{\mathrm{floor},t}\), and select
    \(a_t^{\mathrm{cal}}\) using  \eqref{eq:actionable_recalibration}.
\EndIf
\If{\(a_t^{\mathrm{cal}}=1\)}
    \State Perform the differential repeated-data DFT acquisition and project the updated posterior onto the tracked unit-beam states.
\EndIf
\State Set \(\mathcal P_t^{\mathrm{ctrl}}\) to the recalibrated posterior if \(a_t^{\mathrm{cal}}=1\), and to the prediction-only posterior otherwise.
\State Construct \(\mathcal S_{N,t}^{\mathrm{feas}}\) using \eqref{eq:feasible_unit_sizes}.
\For{\(n_0\in\mathcal S_{N,t}^{\mathrm{feas}}\)}
   \State Determine the residual-feasible unit ownership according to Section~VII-B.
    \State Initialize the beams by \eqref{eq:isolated_beam_initialization} and perform one
    fixed-order sweep using \eqref{eq:coordinate_beam_refinement}.
    \State Evaluate \(\widehat{\mathcal R}_t (\widehat{\mathbf B}_t(n_0),\widehat{\mathbf q}_t(n_0))\)
\EndFor
\State Select \(N_{0,t}^{\star}\) using \eqref{eq:final_unit_size_selection}, and set
\(\mathbf B_t^{\star}\) and \(\mathbf q_t^{\star}\) to the corresponding candidate values.
\State Form the service-first user order and obtain \(\mathbf c_t^{\star}\) using the sequential allocator of Section~IX.
\State Execute \(\mathcal A_t\), compute the delivered net rates, and update the service and rate histories.
\State \Return \(\mathcal A_t\).
\end{algorithmic}
\algorithmendrule
\end{figure}

\section{Receiver-Consistent Channel Allocation}
\label{sec:sequential_channel_allocation}

After fixing the recalibration decision, IRS-unit ownership, and beam configuration, the controller assigns one uplink channel to each active user. Since uplink channels may be reused, each candidate assignment is evaluated using both its conditional uncertainty-aware ZF rate and the interference risk that it creates with the users already assigned to the same channel. To obtain a receiver-consistent pairwise risk metric independently of the sequential channel-assignment state, an allocation-independent reference ZF receiver is constructed from user-specific full-aperture reference beams.

\subsection{Receiver-Consistent Risk and Channel Metrics}

For \(q\in\mathcal Q_t^{\mathrm{tr}}\), define the full-aperture reflected response and its prediction-error covariance as $\widehat{\mathbf r}_{t+1|t}^{(k)}(q) \triangleq \sum_{u=1}^{U}\widehat{\mathbf r}_{u,t+1|t}^{(k)}(q)$  and $\mathbf P_{r,q,t+1|t}^{(k,k)} \approx \sum_{u=1}^{U} \mathbf P_{r,u,q,t+1|t}^{(k,k)}$, where the covariance approximation follows the cross-unit error-independence assumption of Section~IV-C. The corresponding reference effective channel and uncertainty are $\widehat{\mathbf h}_{t+1|t}^{\mathrm{ref},(k)}(q) \triangleq \widehat{\mathbf h}_{t+1|t}^{d,(k)} +\widehat{\mathbf r}_{t+1|t}^{(k)}(q)$, and $\eta_t^ {\mathrm{ref},(k)}(q) \triangleq \operatorname{Tr}\!\big( \mathbf P_{t+1|t}^{d,(k,k)} +\mathbf P_{r,q,t+1|t}^{(k,k)} \big)$. This aggregation is used only for the allocation-independent reference metric and does not replace the contiguous-block acquisition covariance used for the executable IRS candidates.

The user-specific reference beam is selected as
\begin{equation}
q_t^{\mathrm{ref},(k)}= \min\!\Big( \arg\max_{q\in\mathcal Q_t^{\mathrm{tr}}} \Big\{
\big\| \widehat{\mathbf h}_{t+1|t}^{\mathrm{ref},(k)}(q) \big\|_2^2 - \lambda_{\mathrm{ref}} \eta_t^{\mathrm{ref},(k)}(q) \Big\} \Big) \label{eq:reference_beam_selection}
\end{equation}
where \(\lambda_{\mathrm{ref}}\geq0\) controls the gain-uncertainty trade-off. For compactness, define  $\widehat{\mathbf h}_{t+1|t}^{\mathrm{ref},(k)}\triangleq \widehat{\mathbf h}_{t+1|t}^{\mathrm{ref},(k)} \big(q_t^{\mathrm{ref},(k)}\big)$ and $\mathbf P_{t+1|t}^{\mathrm{eff},(k,k),\mathrm{ref}} \triangleq \mathbf P_{t+1|t}^{d,(k,k)} + \mathbf P_{r,q_t^{\mathrm{ref},(k)},t+1|t}^{(k,k)}$. Stacking the reference effective channels of the active users gives $\widehat{\mathbf H}_{t+1|t}^{\mathrm{ref}} \triangleq \big[ \widehat{\mathbf h}_{t+1|t}^{\mathrm{ref},(k)} \big]_{k\in\mathcal D_t}$. The corresponding reference ZF combiner is $\mathbf W_{t+1|t}^{\mathrm{ref,ZF}} = \widehat{\mathbf H}_{t+1|t}^{\mathrm{ref}}\big[ \big( \widehat{\mathbf H}_{t+1|t}^{\mathrm{ref}} \big)^H \widehat{\mathbf H}_{t+1|t}^{\mathrm{ref}} \big]^{\dagger}$, where $\mathbf w_{t+1|t}^{\mathrm{ref},(i)}$ denotes the column associated with user $i$. 

For $i,j\in\mathcal D_t$ and $i\neq j$, define $s_{t+1|t}^{\mathrm{ref},(j\rightarrow i)} \triangleq\big|(\mathbf w_{t+1|t}^{\mathrm{ref},(i)})^{H} \widehat{\mathbf h}_{t+1|t}^{\mathrm{ref}, (j)}\big|^2$. The directed receiver-consistent interference coupling from user $j$ to user $i$ is
 \begin{equation}
e_{\mathrm{int}, t}^{ (j\rightarrow i)} = \frac{ p^{(j)} \Big[ s_{t+1| t}^{\mathrm{ref},(j\rightarrow i)} + (\mathbf w_{t+1|t}^{\mathrm{ref},(i)})^{H} \mathbf P_{t+1| t}^{\mathrm{eff}, (j,j), \mathrm{ref}} \mathbf w_{t+1|t}^{\mathrm{ref}, (i)} \Big] }{ \sigma_n^2 \left\| \mathbf w_{t+1|t}^{\mathrm{ref}, (i)} \right\|_2^2 +\epsilon_I }. \label{eq:directed_receiver_consistent_risk}
\end{equation}
where $\epsilon_I>0$ is a fixed numerical safeguard with the same scale as the post-combining noise power. This metric is directional because the receive combiner associated with user $i$ determines the interference produced by user $j$. It accounts for both the predicted residual multiuser coupling and the interference induced by channel-prediction uncertainty under the same ZF receive-processing model used by the controller. For full-column-rank reference channels, the deterministic cross-user term vanishes under exact ZF; it is retained for pseudoinverse or rank-deficient cases.

For \(c\in\mathcal C\), let \(\mathcal{K}_{t,c}\subseteq\mathcal{D}_t\) denote the users already assigned to channel \(c\), with \(\mathcal K_{t,c}=\varnothing\) before the first assignment. For a candidate user \(k\), define the bidirectional receiver-consistent cochannel risk as
\begin{equation}
\Omega_{t,c}^{(k)} = \begin{cases} 0, & \mathcal{K}_{t,c}=\varnothing, \\[2mm]
\displaystyle \frac{1}{|\mathcal{K}_{t,c}|} \sum_{j\in\mathcal{K}_{t,c}} \big(
e_{\mathrm{int},t}^{(j\rightarrow k)} + e_{\mathrm{int},t}^{(k\rightarrow j)} \big),
& \mathcal{K}_{t,c}\neq\varnothing. \end{cases} \label{eq:bidirectional_channel_risk}
\end{equation}
Both directions are retained because receive combining generally makes cochannel coupling asymmetric: $e_{\mathrm{int},t}^{(j\rightarrow k)}$ measures the risk imposed on candidate user $k$, whereas $e_{\mathrm{int},t}^{(k\rightarrow j)}$ measures the risk introduced by $k$ to an already assigned user $j$. Averaging over $\mathcal K_{t,c}$ limits an additional occupancy bias, since channel load is already reflected in the provisional ZF rate.

For candidate pair \((k,c)\), define the provisional cochannel set \(\mathcal S_{t,c}^{(k)}=\mathcal K_{t,c}\cup\{k\}\). Under the selected IRS action \((\mathbf B_t^\star,\mathbf q_t^\star)\), form the provisional predicted-channel matrix as
\begin{equation}
\widehat{\mathbf H}_{t,c}^{(k)} = \big[ \widehat{\mathbf h}_{t+1|t}^{\mathrm{eff},(j),\mathrm{ctrl}} \big( \mathbf B_t^{\star}, \mathbf q_t^{\star}; a_t^{\mathrm{cal}}
\big) \big]_{j\in\mathcal{S}_{t,c}^{(k)}}, \label{eq:provisional_effective_matrix}
\end{equation}
where the columns are arranged in increasing user-index order. The provisional ZF
combiner is
\begin{equation}
\mathbf W_{t,c}^{\mathrm{ZF},(k)}=\widehat{\mathbf H}_{t,c}^{(k)}\big[
\big(\widehat{\mathbf H}_{t,c}^{(k)}\big)^H \widehat{\mathbf H}_{t,c}^{(k)} \big]^{\dagger}, \label{eq:provisional_zf_combiner}
\end{equation}
where \((\cdot)^{\dagger}\) denotes the Moore-Penrose pseudoinverse. Let \(\mathbf w_{t,c}^{(k)}\) denote the column of \eqref{eq:provisional_zf_combiner} associated with user \(k\). 

The conditional uncertainty-aware SINR is
\begin{equation}
\widehat{\gamma}_{t,c}^{(k)} = \frac{ p^{(k)}  \big|\big(\mathbf w_{t,c}^{(k)}\big)^H
\widehat{\mathbf h}_{t+1|t}^{\mathrm{eff}, (k),\mathrm{ctrl}} \big|^2 }{
\mathcal{I}_{t,c}^{(k)} }, \label{eq:conditional_ua_zf_sinr}
\end{equation}
where
\begin{align}
\mathcal I_{t,c}^{(k)}={}&\sigma_n^2\big\|\mathbf w_{t,c}^{(k)}\big\|_2^2 +
p^{(k)} \big(\mathbf w_{t,c}^{(k)}\big)^H\mathbf P_{t+1|t}^{\mathrm{eff}, (k,k),\mathrm{ctrl}} \mathbf w_{t,c}^{(k)} \nonumber\\
&+ \sum_{j\in \mathcal S_{t,c}^{(k)}\setminus\{k\}} p^{(j)} \Big[ \big| (\mathbf w_{t,c}^{(k)})^H \widehat{\mathbf h}_{t+1|t}^{\mathrm{eff},(j),\mathrm{ctrl}} \big|^2
\nonumber\\ 
&\hspace{18mm} + (\mathbf w_{t,c}^{(k)})^H \mathbf P_{t+1|t}^{\mathrm{eff},(j,j),\mathrm{ctrl}} \mathbf w_{t,c}^{(k)} \Big]. \label{eq:conditional_ua_zf_interference}
\end{align}
 These expressions specialize the uncertainty-aware SINR of Section~VI to the provisional cochannel set. The channel-dependent rate is
\begin{equation}
\widehat R_{t,c}^{(k)} = \zeta_t^{\mathrm{ctrl}} B_{\mathrm{ch}} \log_2\!\left(1+\widehat{\gamma}_{t,c}^{(k)}\right), \label{eq:conditional_controlled_rate}
\end{equation}
where \(\zeta_t^{\mathrm{ctrl}}\triangleq (1-a_t^{\mathrm{cal}})\frac{S-\tau_p}{S} +
a_t^{\mathrm{cal}}(\frac{\beta}{S})\). Equation~\eqref{eq:conditional_controlled_rate} retains the channel-dependent component of the net block rate in \eqref{eq:net_rate}. In a recalibration block, the fresh minimum-reflection baseline contribution is identical for all candidate channels and is therefore omitted from channel ranking while remaining included in the delivered block rate.

The rate and cochannel-risk terms are normalized using the clipped training-statistics mapping in \eqref{eq:7}. For the candidate rate $\widehat{R}^{(k)}_{t,c}$, we use $(x_{\min}^{\rm tr},x_{\max}^{\rm tr},\epsilon_x)=(R_{\min}^{\rm tr},R_{\max}^{\rm tr},\epsilon_R)$ and denote the result by $\mathcal{N}_R(\widehat{R}^{(k)}_{t,c})$, where \(R_{\min}^{\mathrm{tr}}\) and \(R_{\max}^{\mathrm{tr}}\) are the frozen bounds. For the interference risk, define $\widetilde{\Omega}_{t,c}^{(k)} = \frac{\Omega_{t,c} ^{(k)}} {s_{\rm int}^{\rm tr}}$, where $s_{\rm int}^{\rm tr}>0$ is the fixed training-set scale, and apply $\mathcal N_\Omega$ using $(0,\Omega_{\max}^{\rm tr},\epsilon_\Omega)$. All normalization statistics remain fixed after training.

\subsection{Service-First Sequential Allocation}

Users are processed in service-first order because each channel utility depends on the cochannel sets formed by earlier assignments. For $k\in\mathcal D_t$, define the normalized service-priority score $F_t^{(k)} \triangleq \min\!\left\{ 1,\frac{d_t^{(k)}}{s_{\min}^{(k)} +\varepsilon_F} \right\}$, where $\varepsilon_F>0$ and $d_t^{(k)}$ is the residual aperture deficit in \eqref{eq:service_deficit}. Users with equal service priority are ordered by their singleton predicted rate  $\widehat R_{t,\mathrm{iso}}^{(k)} = \zeta_t^{\mathrm{ctrl}} B_{\mathrm{ch}} \log_2 \big( 1+ \widehat{\gamma}_{t,\mathrm{iso}}^{(k)} \big)$, where \(\widehat{\gamma}_{t,\mathrm{iso}}^{(k)}\) follows from \eqref{eq:conditional_ua_zf_sinr} with \(\mathcal S_{t,c}^{(k)}=\{k\}\). 

Let $\boldsymbol{\pi}_t = [\pi_t(1),\ldots,\pi_t(|\mathcal{D}_t|)]^T$ denote the resulting user order. It satisfies $\pi_t(m)\prec_t\pi_t(n)$ for $m<n$, where
\begin{equation}
k\prec_t j \Longleftrightarrow \begin{cases} F_t^{(k)}>F_t^{(j)}, & \text{or}\\[1mm]
F_t^{(k)}=F_t^{(j)} \ \text{and}\ \widehat R_{t,\mathrm{iso}}^{(k)} >
\widehat R_{t,\mathrm{iso}}^{(j)}. \end{cases}
\label{eq:final_channel_processing_order}
\end{equation}
Any remaining tie is resolved by the smaller user index. For the current user $k=\pi_t(m)$ and candidate channel $c$, the allocator evaluates $ \mathcal U_{t,c}^{(k)} = \mathcal N_R\!\big(\widehat R_{t,c}^{(k)}\big)- \lambda_G  \mathcal N_\Omega\!\big( \widetilde{\Omega}_{t,c}^{(k)}\big)$, where $\lambda_G>0$ is fixed during controller development. The channel is selected deterministically as $c_t^{(k)} = \min\!\big( \arg\max_{c\in\mathcal C} \mathcal U_{t,c}^{(k)} \big)$, after which $\mathcal K_{t,c_t^{(k)}} \leftarrow \mathcal K_{t,c_t^{(k)}}\cup\{k\}$.

After all active users have been processed, the final allocation is $\mathbf c_t^\star =[c_t^{(k)}]_{k\in\mathcal D_t}$. No separate load penalty is used because channel occupancy is already reflected in the provisional ZF rate, and Section~XII-B shows no repeatable gain from an additional load term. The complete deterministic procedure is summarized in Algorithm~\ref{alg:sequential_channel_allocator}. Candidate rates are recomputed after each assignment, so all previous decisions are reflected in the subsequent provisional ZF receivers. The procedure therefore evaluates \(C|\mathcal D_t|\) channel candidates instead of enumerating \(C^{|\mathcal D_t|}\) complete channel vectors.

\begin{figure}[t]
\algorithmheading{Receiver-Consistent Sequential Uplink-Channel Allocation}
\label{alg:sequential_channel_allocator}
\begin{algorithmic}[1]
\Require \(\mathcal D_t\); $a_t^{\mathrm{cal}}$; \((\mathbf B_t^\star,\mathbf q_t^\star)\); control-conditioned effective-channel means and covariances; \(\{F_t^{(k)}\}\); \(\{e_{\mathrm{int},t}^{(j\rightarrow k)}\}\); \(\mathcal C\); $\lambda_G$.
\Ensure Uplink-channel vector \(\mathbf c_t^{\star}\).
\State Set \(\mathcal K_{t,c}\gets\varnothing\), \(c\in\mathcal C\).
\State Compute \(\{\widehat R_{t,\mathrm{iso}}^{(k)}\}_{k\in\mathcal{D}_t}\),
sort the active users according to \eqref{eq:final_channel_processing_order}, and obtain \(\boldsymbol{\pi}_t\).
\For{\(m=1,\ldots,|\mathcal{D}_t|\)}
    \State Set \(k\leftarrow\pi_t(m)\).
    \For{each \(c\in\mathcal{C}\)}
       \State Form \(\mathcal{S}_{t,c}^{(k)} =\mathcal{K}_{t,c}\cup\{k\}\).
        \State Recompute the provisional ZF combiner by \eqref{eq:provisional_zf_combiner}.
   \State Compute \(\widehat R_{t,c}^{(k)}\) and \(\Omega_{t,c}^{(k)}\) using
        \eqref{eq:conditional_ua_zf_sinr}-\eqref{eq:conditional_controlled_rate} and
        \eqref{eq:bidirectional_channel_risk}.
       \State Compute \( \mathcal U_{t,c}^{(k)} \gets \mathcal N_R(\widehat R_{t,c}^{(k)}) - \lambda_G \mathcal N_\Omega(\widetilde{\Omega}_{t,c}^{(k)})\).
    \EndFor
    \State Select \(c_t^{(k)} \gets \min\!\big( \arg\max_{c\in\mathcal C} \mathcal U_{t,c}^{(k)} \big)\).
    \State Update \(\mathcal K_{t,c_t^{(k)}} \gets \mathcal K_{t,c_t^{(k)}}\cup\{k\}\).
\EndFor
\State Form \(\mathbf c_t^\star=[c_t^{(k)}]_{k\in\mathcal D_t}\) in increasing user-index order. 
\State \textbf{return} \(\mathbf c_t^{\star}\).
\end{algorithmic}
\algorithmendrule
\end{figure}

\section{Theoretical Properties}

This section analyzes IRS-service feasibility, prediction-error bound, reliability, and recalibration rule properties of the deterministic controller developed in Sections~VII–IX.

\subsection{IRS Feasibility and Effective-Channel Error Bounds}

\begin{proposition}
\label{prop:irs_feasibility}
For \(n_0\in\mathcal S_N\), an exclusive assignment of the \(G(n_0)\) equal-size IRS units satisfying the current residual service requirements exists if and only if $\sum_{k\in\mathcal D_t}n_{\mathrm{req},t}^{(k)}(n_0) \leq G(n_0)$. Moreover, if \(n_0\in\mathcal S_{N,t}^{\mathrm{feas}}\) and, at each step \(g\), the owner is selected from
\(\mathcal K_{t,g}^{\mathrm{feas}}(n_0)\) in  \eqref{eq:feasible_owner_set}, then
\begin{equation}
\begin{gathered}
\sum_{j\in\mathcal D_t} \big[ n_{\mathrm{req},t}^{(j)}(n_0) - n_{t,g}^{(j)}(n_0)
\big]^+ \leq G(n_0)-g, \\
g=1,\ldots,G(n_0),
\end{gathered}\label{eq:residual_feasibility_invariant}
\end{equation}
where $n_{t,g}^{(j)}(n_0) = \sum_{\ell=1}^{g} \mathbf 1\!\big\{ k_{\ell,t}^{\star}(n_0)=j \big\}$. Consequently, \(n_{t,G(n_0)}^{(k)}(n_0) \geq n_{\mathrm{req},t}^{(k)}(n_0)\) for every \(k\in\mathcal D_t\).
\end{proposition}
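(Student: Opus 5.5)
We split the argument into the static equivalence and the sequential invariant. The static part is a counting argument. An exclusive assignment of the $G(n_0)$ units means each unit is given to exactly one active user, as in \eqref{eq:28f}. Under such an assignment, user $k$ receives aperture $n_0 m^{(k)}$, where $m^{(k)}$ is the number of units it owns and $\sum_k m^{(k)}=G(n_0)$. Since $m^{(k)}$ is an integer, the residual requirement $n_0 m^{(k)}\ge d_t^{(k)}$ holds if and only if $m^{(k)}\ge\lceil d_t^{(k)}/n_0\rceil=n_{\mathrm{req},t}^{(k)}(n_0)$.

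For necessity, we sum these inequalities over $k\in\mathcal D_t$ to obtain $\sum_k n_{\mathrm{req},t}^{(k)}\le G(n_0)$. For sufficiency, we give $n_{\mathrm{req},t}^{(k)}$ units to each user. The leftover $G(n_0)-\sum_k n_{\mathrm{req},t}^{(k)}\ge0$ units go to any active user, which is possible because $\mathcal D_t\neq\emptyset$. This yields an exclusive assignment meeting every requirement.

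For the invariant \eqref{eq:residual_feasibility_invariant}, we use induction on $g$, with the potential
\[
\Phi_g\triangleq\sum_{j\in\mathcal D_t}\big[n_{\mathrm{req},t}^{(j)}(n_0)-n_{t,g}^{(j)}(n_0)\big]^+ .
\]
The key identity is $n_{t,g}^{(j)}=n_{t,g-1}^{(j)}+\mathbf 1\{j=k_{g,t}^\star\}$. By \eqref{eq:residual_unit_demand} it gives $\Phi_g=\sum_j\delta_{t,g}^{(j)}(k_{g,t}^\star;n_0)$. If the owner is drawn from $\mathcal K_{t,g}^{\mathrm{feas}}(n_0)$, then \eqref{eq:feasible_owner_set} yields $\Phi_g\le G(n_0)-g$ directly. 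So the inequality at step $g$ is immediate once that step's owner is selected.

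The main obstacle is that this is only meaningful if $\mathcal K_{t,g}^{\mathrm{feas}}(n_0)$ is nonempty at every step, since otherwise \eqref{eq:deterministic_owner} is undefined. This is what the induction must carry. The base case is $\Phi_0=\sum_j n_{\mathrm{req},t}^{(j)}\le G(n_0)$, which is exactly $n_0\in\mathcal S_{N,t}^{\mathrm{feas}}$.

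For the step, assume $\Phi_{g-1}\le G(n_0)-(g-1)$ and consider two cases. If $\Phi_{g-1}>0$, some user $j^\circ$ still has positive residual demand. Choosing $k=j^\circ$ lowers its bracket by exactly one and leaves the others unchanged, so $\sum_j\delta_{t,g}^{(j)}(j^\circ)=\Phi_{g-1}-1\le G(n_0)-g$, and hence $j^\circ\in\mathcal K_{t,g}^{\mathrm{feas}}$. If $\Phi_{g-1}=0$, every choice of $k$ gives a zero sum, and $0\le G(n_0)-g$ because $g\le G(n_0)$; hence $\mathcal K_{t,g}^{\mathrm{feas}}=\mathcal D_t\neq\emptyset$. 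In both cases the feasible set is nonempty, the selected owner satisfies the bound, and the induction closes.

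Finally, taking $g=G(n_0)$ gives $\Phi_{G(n_0)}\le0$. Since every summand of $\Phi_{G(n_0)}$ is nonnegative, each one vanishes, so $n_{t,G(n_0)}^{(k)}\ge n_{\mathrm{req},t}^{(k)}$ for all $k\in\mathcal D_t$. By the static equivalence, each user's executed aperture then covers its residual deficit $d_t^{(k)}$.
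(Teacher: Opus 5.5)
Your proof is correct and follows the same route as the paper's proof. The equivalence is a counting argument: reserve $n_{\mathrm{req},t}^{(k)}(n_0)$ units per user and assign the remaining units arbitrarily. The invariant is an induction on $g$, where the positive-versus-zero remaining-demand cases show that $\mathcal K_{t,g}^{\mathrm{feas}}(n_0)\neq\varnothing$ and that the bound is preserved; your version just makes the ceiling equivalence and the identity $\Phi_g=\sum_j\delta_{t,g}^{(j)}(k_{g,t}^{\star};n_0)$ explicit.
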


\begin{IEEEproof}
Necessity follows because user $k$ requires at least $n_{\mathrm{req},t}^{(k)}(n_0)$ distinct units. Conversely, if the total requirement does not exceed $G(n_0)$, the required units can be reserved first and the remaining units assigned arbitrarily. For the sequential rule, suppose that before processing unit $g$ the remaining demand does not exceed $G(n_0)-g+1$. If this demand is positive, assigning the current unit to any user with positive residual demand reduces it by one; if it is zero, every active user is feasible. Hence, $\mathcal K_{t,g}^{\mathrm{feas}}(n_0)\neq\varnothing$, and \eqref{eq:residual_feasibility_invariant} is preserved. Induction over $g=1,\ldots,G(n_0)$ yields zero residual demand after the final assignment, proving the claim.
\end{IEEEproof}
Thus, for every \(n_0\in\mathcal S_{N,t}^{\mathrm{feas}}\), the owner selection in \eqref{eq:deterministic_owner} preserves the current residual service requirements without
post-processing repair. Let \(\mathcal I_t\) denote the information available after the slot-\(t\) control posterior is formed and before the IRS configuration is selected. It contains the recalibrated posterior when \(a_t^{\mathrm{cal}}=1\) and the prediction-only posterior otherwise. For the selected configuration \((N_{0,t}^{\star},\mathbf B_t^\star,\mathbf q_t^\star)\), define \(G_t^\star\triangleq G(N_{0,t}^\star)\) and \(\mathcal N_{g,t}^\star\triangleq \mathcal N_g(N_{0,t}^\star)\). Let $\Delta\mathbf L_{t+1|t}^{g, (k)} =\big[\mathbf L_{t+1}^{(k)} -\widehat{\mathbf L}_{t+1|t}^{(k)} \big]_{:,\mathcal{N}^{\star}_{g,t}}$ denote the cascaded-channel prediction error over IRS unit \(g\), and  all selected units have the same size, $|\mathcal{N}^{\star}_{g,t}|=N_{0,t}^{\star}$ for $g=1,\ldots,G_t^{\star}$.

\begin{theorem}
Assume that, conditioned on \(\mathcal{I}_t\), $\mathbb{E} \big[\big\| \Delta\mathbf h_{t+1|t}^{d,(k)} \big\|_2^2 \bigm| \mathcal{I}_t \big] \leq \xi_d^{(k)}$ and $\mathbb{E} \big[ \big\| \Delta\mathbf L_{t+1|t}^{g,(k)} \big\|_F^2 \bigm| \mathcal{I}_t \big] \leq \xi_{L,g}^{(k)}$ for every active user \(k\) and selected IRS unit \(g\). Then, under the intended-user dominant-reflection model and the unit-modulus codewords of Section~V, the controller-side effective-channel prediction error satisfies
\begin{equation}
\mathbb{E}\!\Big[ \big\|\Delta\mathbf h_{t+1|t}^{\mathrm{eff}, (k)}\big\|_2^2 \bigm|\mathcal I_t \Big] \leq \Big( \sqrt{\xi_d^{(k)}} + \sum_{g=1}^{G_t^\star}
[\mathbf B_t^\star]_{g,k} \sqrt{N_{0,t}^\star\xi_{L,g}^{(k)}} \Big)^2.
\label{eq:effective_channel_mse_bound}
\end{equation}
\end{theorem}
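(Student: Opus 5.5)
The plan is to decompose the effective-channel error into the direct part plus one reflected part per owned unit, bound each term in $L^2$ conditioned on $\mathcal I_t$, and combine them with Minkowski's inequality.

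\emph{Decomposition.} Under the intended-user dominant-reflection model, \eqref{eq:23} with the masked codeword $\widetilde{\mathbf v}_t^{(k)}$ gives
\[
\Delta\mathbf h_{t+1|t}^{\mathrm{eff},(k)}=\Delta\mathbf h_{t+1|t}^{d,(k)}+\sum_{g=1}^{G_t^\star}[\mathbf B_t^\star]_{g,k}\,\Delta\mathbf L_{t+1|t}^{g,(k)}\,\mathbf v_{g,q_{g,t}^\star}.
\]
This holds because $(\widetilde{\mathbf v}^T\otimes\mathbf I_M)\operatorname{vec}(\Delta\mathbf L)=\Delta\mathbf L\,\widetilde{\mathbf v}$, and the masked codeword vanishes outside the units owned by $k$. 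Hence the column product splits into a sum over the owned units of the submatrix $\Delta\mathbf L^{g,(k)}$ times the restricted codeword $\mathbf v_{g,q}$.

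\emph{Bounding each reflected term.} By Cauchy--Schwarz (operator norm at most Frobenius norm),
\[
\|\Delta\mathbf L^{g,(k)}\mathbf v_{g,q}\|_2\le\|\Delta\mathbf L^{g,(k)}\|_F\,\|\mathbf v_{g,q}\|_2 .
\]
The codewords of Section~V are unit modulus and every selected unit has exactly $N_{0,t}^\star$ elements, so $\|\mathbf v_{g,q}\|_2^2=N_{0,t}^\star$. Taking conditional second moments then gives
\[
\mathbb E\big[\|\Delta\mathbf L^{g,(k)}\mathbf v_{g,q}\|_2^2\mid\mathcal I_t\big]\le N_{0,t}^\star\,\xi_{L,g}^{(k)} .
\]
For this step the codeword must be treated as fixed given $\mathcal I_t$. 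This is justified because the selected configuration $(N_{0,t}^\star,\mathbf B_t^\star,\mathbf q_t^\star)$ is a deterministic function of the controller's information, namely the posterior contained in $\mathcal I_t$ together with the fixed service history.

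\emph{Combining the terms.} Minkowski's inequality in $L^2(\cdot\mid\mathcal I_t)$ gives
\[
\big(\mathbb E[\|X+\textstyle\sum_g Y_g\|_2^2\mid\mathcal I_t]\big)^{1/2}\le(\mathbb E[\|X\|_2^2\mid\mathcal I_t])^{1/2}+\textstyle\sum_g(\mathbb E[\|Y_g\|_2^2\mid\mathcal I_t])^{1/2} .
\]
Substituting $\xi_d^{(k)}$ for the direct term and the per-unit bounds for the reflected terms, weighted by the indicators $[\mathbf B_t^\star]_{g,k}$, and then squaring yields \eqref{eq:effective_channel_mse_bound}.

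This route needs no uncorrelatedness assumption between the direct and cascaded errors, or across units. That is why the bound appears as the square of a sum rather than a sum of squares.

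The main obstacle is the measurability point: $\mathbf q_t^\star$ and $\mathbf B_t^\star$ must be $\mathcal I_t$-measurable so that they can be pulled outside the conditional expectation. I would state this explicitly, noting that the IRS selection in Section~VII-B uses only the control-conditioned posterior, the service history, and tie-breaking rules, all of which are deterministic given $\mathcal I_t$.
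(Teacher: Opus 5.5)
Your proposal is correct and follows the paper's proof essentially step for step. It uses the same decomposition of $\Delta\mathbf h_{t+1|t}^{\mathrm{eff},(k)}$ into the direct error plus the owned-unit terms $\Delta\mathbf L_{t+1|t}^{g,(k)}\mathbf v_{g,q_{g,t}^\star}$, the same bound $\|\mathbf A\mathbf x\|_2\le\|\mathbf A\|_F\|\mathbf x\|_2$ with $\|\mathbf v_{g,q_{g,t}^\star}\|_2^2=N_{0,t}^\star$, and the same conditional Minkowski step followed by squaring; the measurability point you flag is what the paper dispatches by stating that the deterministic action $(\mathbf B_t^\star,\mathbf q_t^\star)$ is fixed given $\mathcal I_t$.
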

\begin{IEEEproof} Conditioned on \(\mathcal I_t\), the deterministic action \((\mathbf B_t^\star, \mathbf q_t^\star)\) is fixed, and the effective-channel prediction error is
\begin{equation}
\Delta\mathbf h_{t+1|t}^{\mathrm{eff},(k)} = \Delta\mathbf h_{t+1|t}^{d,(k)}
+ \sum_{g=1}^{G_t^\star} [\mathbf B_t^\star]_{g,k} \Delta\mathbf L_{t+1|t}^{g, (k)} \mathbf v_{g,q_{g,t}^{\star}}. \label{eq:effective_error_decomposition}
\end{equation}
Minkowski's inequality, together with \(\|\mathbf A\mathbf x\|_2 \leq\|\mathbf A\|_F\|\mathbf x\|_2\) and $\big\| \mathbf v_{g,q_{g,t}^{\star}} \big\|_2^2 = N_{0,t}^{\star}$, gives
\begin{align}
&\sqrt{ \mathbb{E}\Big[\big\|\Delta\mathbf h_{t+1|t}^{\mathrm{eff}, (k)}\big\|_2^2
\bigm| \mathcal{I}_t \Big]}\nonumber\\
&\quad\leq \sqrt{\varepsilon_d^{(k)}} + \sum_{g=1}^{G^{\star}_t} a_{g,t}^{(k)}\sqrt{
\mathbb{E} \Big[ \big\|\Delta\mathbf L_{t+1|t}^{g,(k)}\mathbf v_{g,q_{g,t}^{\star}}
\big\|_2^2 \bigm| \mathcal{I}_t \Big] }, \label{eq:minkowski_effective_error}
\end{align}
which proves \eqref{eq:effective_channel_mse_bound}.
\end{IEEEproof}
The bound does not require independence among the selected control-unit prediction errors.

\subsection{Reliability Bounds and Recalibration Actionability}

For the selected IRS action, define $\mathbf e_t^{(k)} \triangleq \mathbf h_{t+1}^{\mathrm{eff},(k)} (\mathbf B_t^\star,\mathbf q_t^\star) - \widehat{\mathbf h}_{t+1|t}^{\mathrm{eff},(k),\mathrm{ctrl}} (\mathbf B_t^\star,\mathbf q_t^\star;a_t^{\mathrm{cal}})$,  where the selected-action arguments are omitted for compactness.  Assume that the control-conditioned uncertainty model upper-bounds the conditional effective-channel mean-square error as $ \mathbb{E}\!\big[ \big\|\mathbf e_t^{(k)}\big\|_2^2 \bigm|\mathcal I_t \big] \leq \eta_t^{(k)}$ and $\eta_t^{(k)}
\triangleq \operatorname{Tr}\!\big( \mathbf P_{t+1|t}^{\mathrm{eff},(k,k),\mathrm{ctrl}}
(\mathbf B_t^\star,\mathbf q_t^\star;a_t^{\mathrm{cal}}) \big)$.
\begin{cor}
For every \(\delta_k\in(0,1)\),
\begin{equation}
\Pr\!\Big(\big\|\mathbf e_t^{(k)}\big\|_2 \leq \sqrt{\frac{\eta_t^{(k)}}{\delta_k}} | \mathcal I_t \Big) \geq 1-\delta_k. \label{eq:high_probability_effective_error_bound}
\end{equation}
\end{cor}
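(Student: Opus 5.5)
The corollary follows from the conditional Markov inequality applied to the nonnegative random variable $\|\mathbf e_t^{(k)}\|_2^2$. The assumed control-conditioned MSE bound $\mathbb E[\|\mathbf e_t^{(k)}\|_2^2\mid\mathcal I_t]\le\eta_t^{(k)}$ is the only input needed, which is why the bound is distribution-free.

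Conditioned on $\mathcal I_t$, the action $(\mathbf B_t^\star,\mathbf q_t^\star,a_t^{\mathrm{cal}})$ is fixed. So is the covariance $\mathbf P_{t+1|t}^{\mathrm{eff},(k,k),\mathrm{ctrl}}$, because it is retrieved from the posterior and the precomputed tables. Hence $\eta_t^{(k)}$ is an $\mathcal I_t$-measurable constant, and the threshold $\sqrt{\eta_t^{(k)}/\delta_k}$ is deterministic given $\mathcal I_t$.

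For any $\mathcal I_t$-measurable $a>0$, Markov's inequality applied to the conditional law gives
\[
\Pr\big(\|\mathbf e_t^{(k)}\|_2^2> a\mid\mathcal I_t\big)\le \frac{\mathbb E[\|\mathbf e_t^{(k)}\|_2^2\mid\mathcal I_t]}{a}\le\frac{\eta_t^{(k)}}{a}.
\]
Setting $a=\eta_t^{(k)}/\delta_k$ makes the right-hand side equal to $\delta_k$. Since the map $x\mapsto\sqrt{x}$ is monotone on $[0,\infty)$, the events $\{\|\mathbf e_t^{(k)}\|_2^2>\eta_t^{(k)}/\delta_k\}$ and $\{\|\mathbf e_t^{(k)}\|_2>\sqrt{\eta_t^{(k)}/\delta_k}\}$ coincide. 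Taking the complement yields \eqref{eq:high_probability_effective_error_bound}.

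The only step that needs care is the degenerate case $\eta_t^{(k)}=0$, because then Markov's inequality cannot be applied with $a=0$. In that case the MSE bound forces $\|\mathbf e_t^{(k)}\|_2=0$ almost surely given $\mathcal I_t$, so the probability in \eqref{eq:high_probability_effective_error_bound} equals one and the claim still holds.

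If a version with the prediction model instead of the assumed bound is wanted, the preceding Theorem can supply $\eta_t^{(k)}$ directly. One simply replaces $\eta_t^{(k)}$ by the right-hand side of \eqref{eq:effective_channel_mse_bound}, and the same Markov argument goes through unchanged.
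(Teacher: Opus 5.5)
Your proposal is correct and follows essentially the same route as the paper: conditional Markov's inequality applied to \(\|\mathbf e_t^{(k)}\|_2^2\) with threshold \(\eta_t^{(k)}/\delta_k\), followed by taking the complement. Your separate handling of the degenerate case \(\eta_t^{(k)}=0\) and your note that \(\eta_t^{(k)}\) is \(\mathcal I_t\)-measurable are small rigor points that the paper's proof leaves implicit.
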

\begin{IEEEproof}
By assumption, $\mathbb{E}\!\big[ \|\mathbf e_t^{(k)}\|_2^2 \bigm|\mathcal I_t
\big] \leq \eta_t^{(k)}$. Conditional Markov's inequality applied to \(\|\mathbf e_t^{(k)}\|_2^2\) yields $\Pr\!\big( \|\mathbf e_t^{(k)}\|_2^2 > \frac{\eta_t^{(k)}}{\delta_k} |\mathcal I_t \big) \leq \delta_k$, which proves \eqref{eq:high_probability_effective_error_bound}.
\end{IEEEproof}

The result is distribution free once the stated trace-dominance condition holds.  Hence, a smaller \(\eta_t^{(k)}\) yields a tighter conditional error bound and gives this covariance trace a direct reliability interpretation. Before IRS selection, the controller computes \(U_{\mathrm{pre},t}\) and \(U_{\mathrm{floor},t}\) from \eqref{eq:actionable_recalibration} to determine whether recalibration is actionable. When all active users have initialized cascaded-channel states, recalibration is selected only if \(U_{\mathrm{pre},t}>\varepsilon_{\mathrm{cal}}\) and \(U_{\mathrm{floor},t}<U_{\mathrm{pre},t}\).  Hence, \(U_{\mathrm{floor},t}\geq U_{\mathrm{pre},t}\) suppresses recalibration even when the absolute-risk threshold is  exceeded.  The first condition flags excessive normalized prediction risk, whereas the second requires a predicted reduction in the worst-user normalized risk. The actionability test is model based and does not guarantee a smaller realized estimation error in every block.

\section{Computational Complexity Analysis}

This section characterizes the online complexity of the channel processing in Sections~III-IV and the structured controller in Sections~VII-IX. Offline transition-model identification, covariance calibration and lookup tables, and training-set normalization statistics are excluded. Let \(a_t\triangleq|\mathcal D_t|\), \(Q_t^{\mathrm{tr}}\triangleq|\mathcal Q_t^{\mathrm{tr}}|\), and \(Q_t^{\mathrm{cand}}\triangleq|\mathcal Q_t^{\mathrm{cand}}|\). Let \(K_t^{\mathrm{cas}}\leq K\) denote the number of users with initialized cascaded-channel states, and let \(U\) be the number of elementary IRS units.  Define $\overline{G}_t \triangleq \sum_{n_0\in\mathcal{S}_{N,t} ^{\mathrm{feas}}} G(n_0)$. For the fallback case \(\mathcal S_{N,t}^{\mathrm{feas}}=\varnothing\), set \(\overline G_t=G(\min\mathcal S_N)\), consistent with Section~VIII-B.

For a provisional ZF receiver with \(s\) predicted effective channels, the Gram-matrix construction and pseudoinverse require $\mathscr C_{\mathrm{ZF}}(s) = \mathcal O(Ms^2+s^3)$.  With dense \(M\times M\) effective-channel error covariances, one uncertainty-aware ZF rate evaluation has complexity $\mathscr C_{\mathrm{UAZF}}^{(1)}(s) = \mathcal O(Ms^2+s^3+sM^2)$, while evaluating all \(s\) rates under the same configuration is upper-bounded by $\mathscr C_{\mathrm{UAZF}}^{(\Sigma)}(s) = \mathcal O(Ms^2+s^3+s^2M^2)$.

\subsection{Channel Processing and Physical Metrics}

The covariance, cross-covariance, and gain matrices of the O-CA tracker have complexity $\mathscr{C}_{t}^{\mathrm{OCA}} = \mathcal{O}\big(\frac{a_{t}(a_{t}+1)}{2}M^3\big)$. Under the steady-state conditions of Remark~2, these matrices can be computed offline, and the online prediction and correction of the direct-channel means require $\mathscr{C}_{t}^{\mathrm{OCA,ss}}= \mathcal{O} \big( (K+a_t)M^2 \big)$. When \(a_t^{\mathrm{cal}}=1\), semi-blind acquisition processes one minimum-reflection block and \(N\) DFT-coded repeated-data blocks. The differential recovery, inverse DFT, and projection onto the tracked unit-beam states yield the recalibration-stage complexity $\mathscr C_t^{\mathrm{SB}} = \mathcal O\!\big(NM^2 (\alpha+a_t) +a_tMN(\log N+Q_t^{\mathrm{tr}}) \big)$, where the lower-dimensional \(a_t\times a_t\) operations are absorbed by this bound since \(a_t\leq M\).
      
For comparison, covariance propagation of the full \(MN\)-dimensional cascaded state would require $\mathcal O\!\big(K_t^{\mathrm{cas}}M^3N^3\big)$, whereas propagation of all \(M\)-dimensional elementary unit-beam covariances requires  $\mathcal O\!\big( K_t^{\mathrm{cas}}UQ_t^{\mathrm{tr}}M^3 \big)$. In the structured implementation of Section~IV-C, the covariance sequences are retrieved from precomputed age-dependent tables, so only the unit-beam means are propagated online, yielding $\mathscr C_t^{\mathrm{beam}} = \mathcal O\!\big( K_t^{\mathrm{cas}}UQ_t^{\mathrm{tr}}M^2 \big)$. The actionability test requires $\mathscr C_t^{\mathrm{cal}} = \mathcal O(a_tUQ_t ^{\mathrm{tr}}M)$. The full-aperture reference responses are accumulated during the same traversal and cached. Hence, reference-beam evaluation adds \(\mathcal O(a_tQ_t^{\mathrm{tr}}M)\), while forming the reference ZF receiver and all directed receiver-consistent interference metrics requires $\mathscr C_t^{\mathrm{int}} = \mathcal O\!\big( \mathscr C_{\mathrm{ZF}}(a_t)+a_t^2M^2 \big)$.

\subsection{Structured Resource-Control Complexity}

Across all feasible unit sizes, the isolated ownership stage requires \(a_tQ_t^{\mathrm{cand}}\overline G_t\) system-rate evaluations, while the coordinate sweep requires an additional \(Q_t^{\mathrm{cand}}\overline G_t\). Therefore, the dominant IRS-side physical evaluation cost is $\mathscr C_t^{\mathrm{IRS}} = \mathcal O\!\left( Q_t^{\mathrm{cand}}(a_t+1)\overline G_t \mathscr C_{\mathrm{UAZF}} ^{(\Sigma)}(a_t) \right)$. 

The candidate-size feasibility tests and incremental residual-feasibility updates add only $\mathcal O\!\big( |\mathcal S_N|a_t+a_t\overline G_t \big)$ operations.
After fixing the IRS action, the singleton rates used for tie breaking are computed once and cached; their cost is dominated by the subsequent channel-candidate evaluations.  Sorting the active users requires \(\mathcal O(a_t\log a_t)\). Let \(s_{t,c}^{(m)}\) denote the provisional cochannel-set size when the \(m\)-th processed user is evaluated on channel \(c\). The sequential channel-allocation cost is $\mathscr C_t^{\mathrm{ch}} = \mathcal O\!\big( \sum_{m=1}^{a_t}\sum_{c=1}^{C} \mathscr C_{\mathrm{UAZF}}^{(1)} \!\big(s_{t,c}^{(m)}\big) +a_t^2+a_t\log a_t \big)$, where the \(a_t^2\) term accounts for accumulated pairwise-risk summations using the cached receiver-consistent metrics. Since \(s_{t,c}^{(m)}\leq a_t\), a simpler worst-case bound is $\mathscr{C}_{t}^{\mathrm{ch}} = \mathcal{O} \left( Ca_t \left[ Ma_t^2+a_t^3+a_tM^2
\right] + a_t^2 + a_t\log a_t \right)$.

\subsection{Overall Online Complexity and Scaling}

With per-user rolling sums, the service-history and smoothed-rate updates require \(\mathcal O(K)\) operations per block. Omitting the lower-order feasibility, ordering, and cached reference-beam terms, a compact online upper bound is
\begin{align}
\mathscr{C}_{t}^{\mathrm{online}}=\mathcal{O}&\Big(\frac{a_{t}(a_{t}+1)}{2}M^3+K_t^{\mathrm{cas}} UQ_t^{\mathrm{tr}}M^2 +a_tUQ_t^{\mathrm{tr}}M\nonumber\\
& + a_t^{\mathrm{cal}}\big[NM^2(\alpha+a_t)+a_tMN\big(\log N+Q_t^{\mathrm{tr}}
\big) \big] \nonumber\\
& + \mathscr{C}_{\mathrm{ZF}}(a_t) + a_t^2M^2 + Ca_t\mathscr C_{\mathrm{UAZF}}^{(1)}(a_t) +K \nonumber\\ 
& + Q_t^{\mathrm{cand}}(a_t+1) \overline{G}_t \mathscr{C}_{\mathrm{UAZF}}^{(\Sigma)}(a_t) \Big), \label{eq:overall_online_complexity}
\end{align}
where the semi-blind term is incurred only when \(a_t^{\mathrm{cal}}=1\).

Conditioned on the selected recalibration mode, an unconstrained enumeration of owner, beam, and channel choices contains at most $N_t^{\mathrm{ex}} = C^{a_t}
\sum_{n_0\in\mathcal S_{N,t}^{\mathrm{feas}}} \left(a_tQ_t^{\mathrm{cand}} \right)^{G(n_0)}$ candidates. In contrast, the structured procedure evaluates at most $N_t^{\mathrm{IRS,eval}} = Q_t^{\mathrm{cand}}(a_t+1)\overline G_t$ IRS candidates and $N_t^{\mathrm{ch,eval}}=Ca_t$ sequential channel candidates. Thus, for fixed \(|\mathcal S_N|\), \(Q_t^{\mathrm{cand}}\), and \(C\), the resource-control search scales polynomially rather than exponentially with the joint action dimension. Reduced-order prediction replaces full $MN$-dimensional covariance propagation with independent $M$-dimensional unit-beam mean propagations and offline covariance retrieval.

\section{Simulation Setup and Numerical Results}

This section evaluates the proposed channel-acquisition, prediction, and deterministic resource-control framework. We consider the single-cell uplink model of Section~II, with the main physical, mobility, IRS, service, and controller parameters summarized in Table~\ref{tab:simulation_parameters}. 

\subsection{Simulation Setup and Holdout Summary}

The BS and IRS centers are located at $(0,0,10)$~m and $(50,20,5)$~m, respectively, with users at a height of $1.5$~m. Initial user positions are independently drawn over $10\leq x\leq100$~m and $-40\leq y\leq60$~m, subject to a $10$-m minimum distance from both the BS and IRS. Users move with random initial directions and speeds up to the nominal $v_{\max}=3$~m/s, with boundary reflections used to confine the trajectories. The direct link follows the adopted urban-microcell model, whereas the IRS-related links include geometry-dependent large-scale attenuation and Rician small-scale fading. Empty activity realizations are avoided by activating one user.

\begin{table}[t]
\centering
\caption{Main Simulation Parameters}
\label{tab:simulation_parameters}
\renewcommand{\arraystretch}{1.08}
\begin{tabular}{l c}
\hline
\textbf{Parameter} & \textbf{Value} \\
\hline
BS antennas, \(M\) & \(100\) \\
Users in the pilot group, \(K\) & \(5\) \\
Carrier frequency & \(3.5\) GHz \\
Subcarrier spacing & \(30\) kHz \\
Active subcarriers & \(96\) \\
Uplink-channel bandwidth, \(B_{\mathrm{ch}}\) & \(2.88\) MHz \\
Physical control-block duration & \(2\) ms \\
OFDM symbols per control block & \(56\) \\
User transmit power, \(p^{(k)}\) & \(23\) dBm \\
Receiver noise figure & \(5\) dB \\
Maximum nominal user speed & \(3\) m/s \\
User activity probability & \(0.8\) \\
IRS control dimension, \(N\) & \(32\) \\
Elementary IRS resolution, \(\Delta N\) & \(8\) \\
Beam-codeword count & \(4\) \\
IRS Rician factor & \(10\) dB \\
Candidate unit sizes, \(\mathcal S_N\) & \(\{8,16,32\}\) \\
Service window, \(W\) & \(4\) \\
Minimum \(W\)-block service, \(s_{\min}^{(k)}\) & \(8\) elements, \(\forall k\) \\
Rate-smoothing factor, \(\alpha_R\) & \(0.05\) \\
Recalibration threshold, \(\varepsilon_{\mathrm{cal}}^{\mathrm{dB}}\)
& \(-3.5\) dB \\
Channel-risk coefficient, \(\lambda_G\) & \(2\) \\
Reference-beam uncertainty coefficient, $\lambda_{\mathrm{ref}}$ & $1$ \\
Uplink-channel counts & \(C\in\{2,3\}\) \\
\hline
\end{tabular}
\end{table}

For repeated-data acquisition, one OFDM symbol over the $96$ active subcarriers is used for direct-channel pilots, giving $\tau_p=96$ channel uses, while the repeated unknown block has $\alpha=96$ channel uses. With $N=32$, recalibration uses one minimum-reflection baseline and $32$ DFT-coded IRS states. Each physical block contains $S=56\times96=5376$ channel uses, leaving $\beta=S-\tau_p-\alpha(N+1)=2112$ regular-data channel uses, equivalent to $22$ OFDM symbols. Since only the first repeated-data copy carries fresh payload, the fresh-payload fractions are $23/56=41.071\%$ in recalibration blocks and $55/56$ in prediction-only blocks. These factors are included directly in all reported net block rates.

The actionability-aware recalibration rule of Section~VII-A is compared with the absolute-risk reference $a_{t,\mathrm{abs}}^{\mathrm{cal}} =\mathbf{1}\{U_{\mathrm{pre},t} >\epsilon_{\mathrm{cal}}\}$. The proposed rule additionally requires the predicted post-acquisition covariance floor to be lower than the current prediction risk. Both rules are evaluated on paired trajectories over $30$ previously unseen environments, each containing $160$ physical blocks, with no holdout-based parameter selection.

\begin{table}[t]
\centering
\caption{Final Holdout Comparison of Recalibration Rules}
\label{tab:actionability_holdout}
\renewcommand{\arraystretch}{1.08}
\begin{tabular}{l cc}
\hline
\textbf{Metric} &
\textbf{Absolute} &
\textbf{Actionable} \\
\hline
Mean recalibration rate [\%] & 3.103 & 2.126 \\
90th percentile [\%] & 6.897 & 5.172 \\
95th percentile [\%] & 12.931 & 6.034 \\
Maximum environment rate [\%] & 17.241 & 7.759 \\
Maximum consecutive blocks & 7 & 2 \\
\(C=2\) realized rate [Mbps] & 121.734 & 122.492 \\
\(C=3\) realized rate [Mbps] & 155.535 & 156.667 \\
\(C=2\) Jain index & 0.9060 & 0.9041 \\
\(C=3\) Jain index & 0.9405 & 0.9396 \\
\(C=2\) min. smoothed rate [Mbps] & 16.535 & 16.680 \\
\(C=3\) min. smoothed rate [Mbps] & 22.509 & 22.716 \\
\hline
\end{tabular}
\end{table}

As summarized in Table~\ref{tab:actionability_holdout}, the actionability condition reduces the mean recalibration rate from $3.103\%$ to $2.126\%$. The $95$th-percentile and maximum environment-level rates decrease from $12.931\%$ to $6.034\%$ and from $17.241\%$ to $7.759\%$, respectively, while the longest recalibration sequence decreases from seven to two blocks. Of the $150$ blocks satisfying the absolute-risk condition, $75$ ($50\%$) are rejected by the actionability test. This reduction does not degrade realized throughput: the mean sum rate increases by $0.622\%$ for $C=2$ and $0.728\%$ for $C=3$. The corresponding Jain-index changes are only $-0.0020$ and $-0.0009$, while the minimum smoothed-user rates improve by $0.878\%$ and $0.920\%$. The zero median paired throughput change for both channel counts further indicates that the proposed rule mainly suppresses recalibrations whose predicted covariance floor provides no uncertainty reduction. Thus, the actionability condition acts as a structural safeguard rather than an additional continuously tuned control term.

\subsection{Numerical Results and Design Validation}

Having fixed the simulation configuration and holdout protocol, we next evaluate the acquisition accuracy, reduced-order prediction, recalibration behavior, IRS-assisted throughput, controller design choices, end-to-end performance, and robustness in Figs.~\ref{fig:semi_blind_anchor_floor}-\ref{fig:robustness_mobility_load}. The differential repeated-data acquisition is evaluated independently of the resource controller using $\mathrm{NMSE}_{L,t}^{(k)}[\mathrm{dB}] = 10\log_{10} \frac{ \|\widehat{\mathbf L}_{t}^{(k)}-\mathbf L_t^{(k)}\|_F^2  }{\|\mathbf L_t^{(k)}\|_F^2 }$, reported in dB. Three references are considered: TRUE-H uses the true direct-channel matrix, O-CA-H denotes the practical acquisition based on the O-CA estimate, and Oracle-X assumes knowledge of the repeated user-data matrix. The semi-blind error covariance $\mathbf C_{\mathrm{SB}}^{(k)}$ is estimated from independent acquisition realizations and frozen before online evaluation.

\begin{figure*}[t!]
\centering
\includegraphics[width=\linewidth]{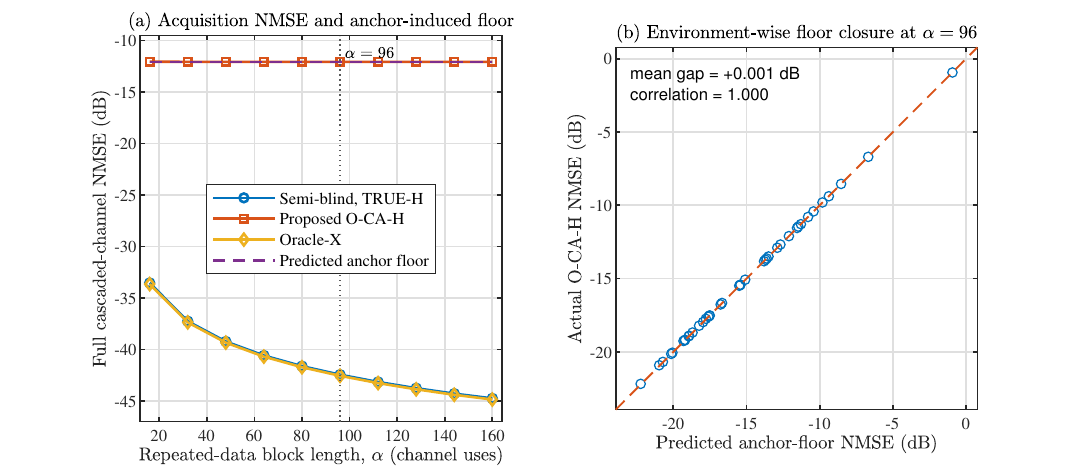}
\caption{Accuracy and direct-anchor-induced error floor of the differential semi-blind acquisition.}
\label{fig:semi_blind_anchor_floor}
\end{figure*}
Fig.~\ref{fig:semi_blind_anchor_floor} characterizes the finite-sample accuracy and anchor-induced error floor of the proposed semi-blind acquisition. In Fig.~\ref{fig:semi_blind_anchor_floor}(a), the TRUE-H and Oracle-X references remain nearly coincident, with the NMSE improving from approximately $-33.5$~dB at $\alpha=16$ to $-44.8$~dB at $\alpha=160$, confirming accurate differential DFT recovery when reliable multiuser separation is available. In contrast, the practical O-CA-H result remains near $-12.06$~dB throughout the sweep, showing that a longer repeated-data block cannot remove the error caused by an imperfect direct-channel anchor. At the nominal $\alpha=96$, the predicted anchor-only floor and simulated O-CA-H NMSE are $-12.065$ and $-12.064$~dB, respectively, with an environment-wise correlation of $1.000$, as shown in
Fig.~\ref{fig:semi_blind_anchor_floor}(b). The mismatch matrix contains $17.57\%$ mean off-diagonal energy, while its median and $90$th-percentile condition numbers are only $1.385$ and $1.969$. The observed floor is mainly caused by cross-user mixing due to direct-channel estimation error, rather than numerical ill-conditioning or insufficient repeated-data length. This acquisition error is therefore incorporated into $\mathbf C_{\mathrm{SB}}^{(k)}$ for subsequent uncertainty-aware control.

\begin{figure*}[t!]
\centering
\includegraphics[width=\linewidth]{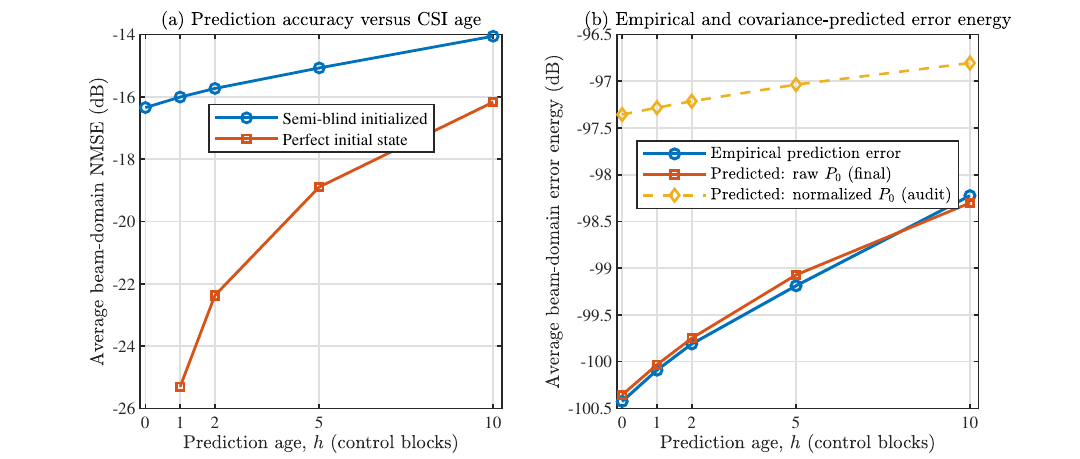}
\caption{Validation of the reduced-order beam-domain prediction and uncertainty model under the frozen structured physical channel.}
\label{fig:beam_domain_prediction_validation}
\end{figure*}
Fig.~\ref{fig:beam_domain_prediction_validation} validates the reduced-order beam-domain predictor initialized by the practical semi-blind acquisition. As shown in Fig.~\ref{fig:beam_domain_prediction_validation}(a), the average NMSE degrades only from $-16.344$~dB at $h=0$ to $-14.052$~dB at $h=10$. Although the perfect-initial-state reference is substantially more accurate at short prediction ages, its advantage decreases from about $9.30$~dB at $h=1$ to $2.12$~dB at $h=10$, indicating that temporal process uncertainty gradually becomes more important than the initial acquisition error. Fig.~\ref{fig:beam_domain_prediction_validation}(b) further shows that the raw covariance retained in the final controller closely follows the empirical prediction-error energy. For $h=0,1,2,5,$ and $10$, the empirical-to-predicted differences are $-0.070$, $-0.058$, $-0.061$, $-0.117$, and $0.078$~dB, respectively, remaining within $0.12$~dB of ideal aggregate calibration, whereas the normalized covariance is consistently more conservative. These results support the raw post-acquisition covariance and age-indexed reduced-order uncertainty model in the subsequent effective-channel prediction, recalibration, and reliability analysis.

\begin{figure*}[t!]
\centering
\includegraphics[width=\linewidth]{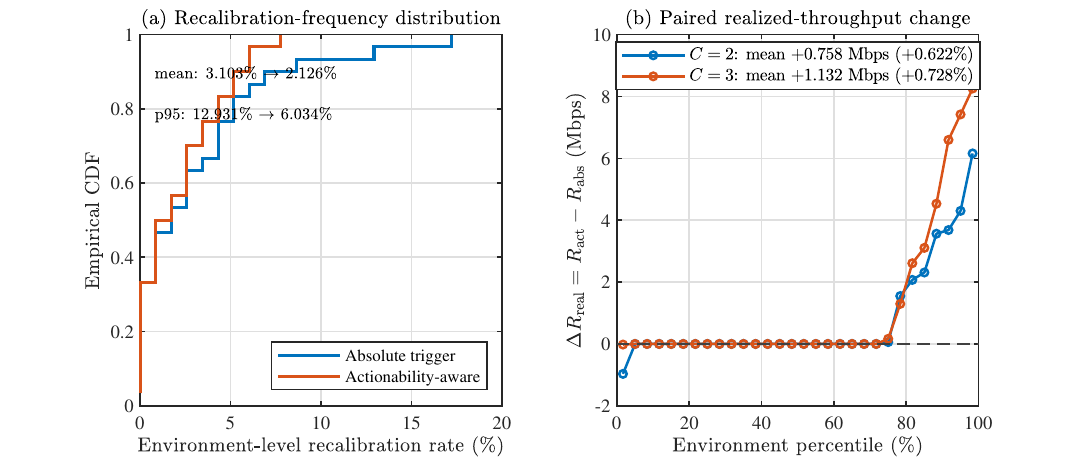}
\caption{Effect of the actionability-aware recalibration rule on the untouched paired holdout.}
\label{fig:actionability_holdout}
\end{figure*}
Consistent with Table~\ref{tab:actionability_holdout}, Fig.~\ref{fig:actionability_holdout} illustrates the distributional effect of the actionability-aware rule. Fig.~\ref{fig:actionability_holdout}(a) shows a $31.5\%$ reduction in the mean recalibration rate together with a clear reduction in upper-tail variability and repeated recalibration events.
Fig.~\ref{fig:actionability_holdout}(b) shows that this lower acquisition frequency does not reduce realized throughput: the average paired gains are $0.758$~Mbps ($0.622\%$) for $C=2$ and $1.132$~Mbps ($0.728\%$) for $C=3$. Moreover, $70\%$ of the environments exhibit identical throughput, while only $3.33\%$ show a negative paired change. Together with the fairness and minimum-rate results in Table~\ref{tab:actionability_holdout}, these findings show that actionability substantially reduces recalibration overhead while preserving the controller's rate-fairness behavior.

\begin{figure*}[t!]
\centering
\includegraphics[width=\linewidth]{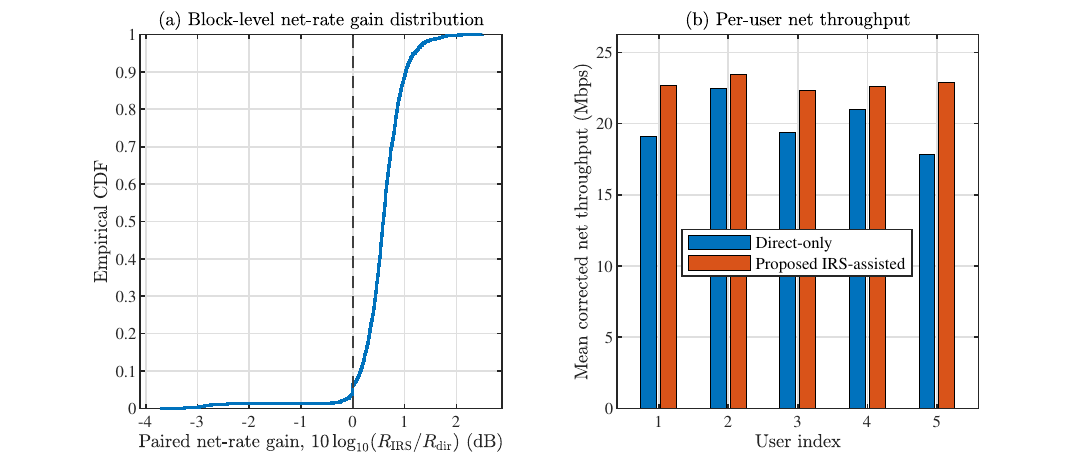}
\caption{Realized net-throughput gain of the proposed IRS-assisted physical-layer control relative to the direct-only reference.}
\label{fig:irs_direct_phy_gain}
\end{figure*}
Fig.~\ref{fig:irs_direct_phy_gain} compares the realized physical-layer performance of the proposed IRS-assisted control with direct-only transmission under identical channel realizations and block-level payload accounting. The mean realized net sum rate increases from $79.966$~Mbps to $91.234$~Mbps, corresponding to a $14.09\%$ gain after acquisition overhead. As shown in Fig.~\ref{fig:irs_direct_phy_gain}(a), $93.87\%$ of the evaluated blocks achieve a positive paired net-rate gain, with $10$th-, $50$th-, and $90$th-percentile gains of $0.154$, $0.587$, and $1.022$~dB, respectively. The correlation between the predicted and realized IRS-assisted net rates is $0.914$, supporting the uncertainty-aware predicted rate as the physical ranking metric of the structured controller. Fig.~\ref{fig:irs_direct_phy_gain}(b) further shows that all five users achieve higher mean net throughput, with individual gains ranging from $4.20\%$ to $28.20\%$. The pooled active-user $10$th-percentile rate increases from $9.830$~Mbps to $11.737$~Mbps, corresponding to an improvement of approximately $19.4\%$. Thus, the aggregate IRS gain is not concentrated on a small subset of users and remains favorable for lower-tail user performance.

\begin{figure*}[t!]
\centering
\includegraphics[width=\linewidth]{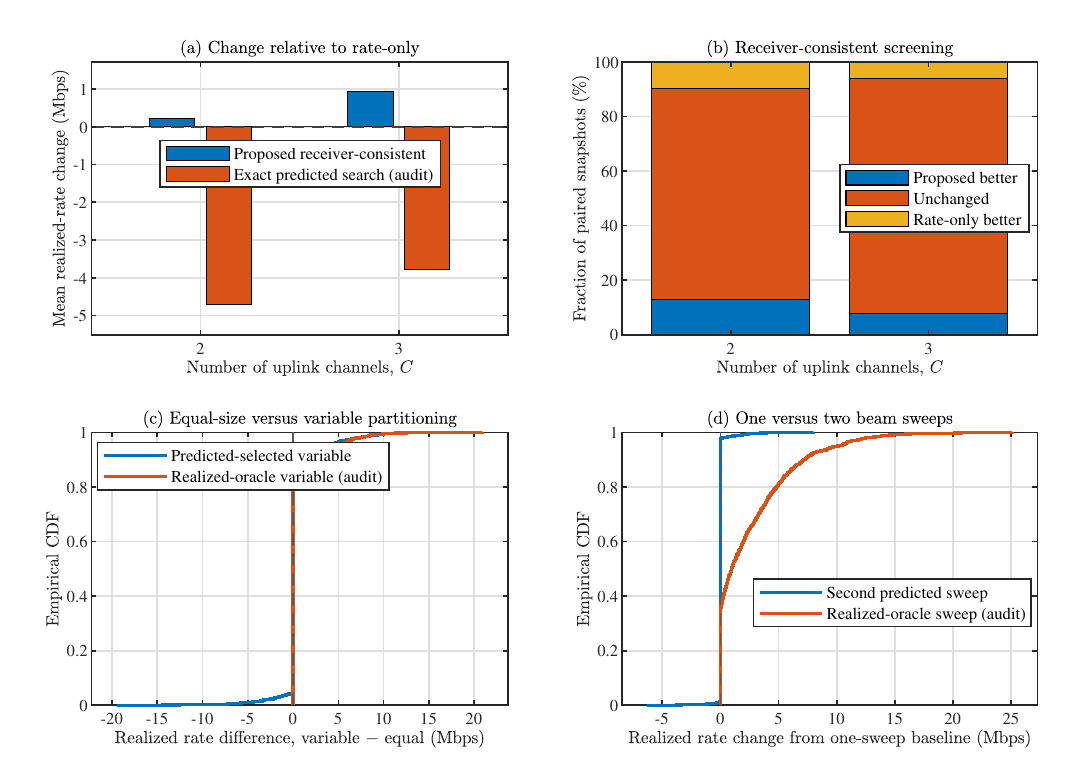}
\caption{Receiver-consistent channel-allocation validation and structural ablations of the final controller.}
\label{fig:receiver_structural_validation}
\end{figure*}
Fig.~\ref{fig:receiver_structural_validation} jointly validates the receiver-consistent channel-allocation rule and the structural choices retained in the final controller. In Fig.~\ref{fig:receiver_structural_validation}(a), the proposed correction increases the mean realized sum rate over the rate-only allocator by $0.228$~Mbps ($0.188\%$) for $C=2$ and $0.930$~Mbps ($0.605\%$) for $C=3$, whereas the exact predicted-rate search reduces the realized rate by $4.716$ and $3.789$~Mbps, respectively, despite maximizing the predicted objective. Fig.~\ref{fig:receiver_structural_validation}(b) confirms that the correction is selective: the rate-only assignment remains unchanged in $77.56\%$ and $86.35\%$ of the paired snapshots, while the proposed rule improves the realized rate more often than it reduces it for both channel counts. The ablations in Fig.~\ref{fig:receiver_structural_validation}(c) and (d) show that further structural complexity provides little repeatable gain. Variable partitioning yields only $0.283$~Mbps mean improvement over the equal-size design, with zero median gain and a positive gain in only $10.43\%$ of comparable epochs, while a second predicted beam sweep changes the beam vector in only $3.76\%$ of the slots and provides only $0.027$~Mbps mean gain.

The corresponding realized-oracle audits retain $0.668$ and $2.528$~Mbps of mean headroom, indicating that the remaining opportunity is mainly associated with prediction mismatch rather than insufficient partition or beam-search flexibility. These results support the receiver-consistent correction together with equal-size IRS candidates and single-sweep beam refinement in the final controller. The finest admissible granularity, $N_0=8$, is selected in $77.64\%$ of the steady blocks, with a similar granularity distribution in normal and recalibration blocks. A separate channel-load penalty provides no repeatable gain and is therefore assigned zero weight, supporting the two-term channel utility of Section~IX.

\begin{figure*}[t!]
\centering
\includegraphics[width=\linewidth]{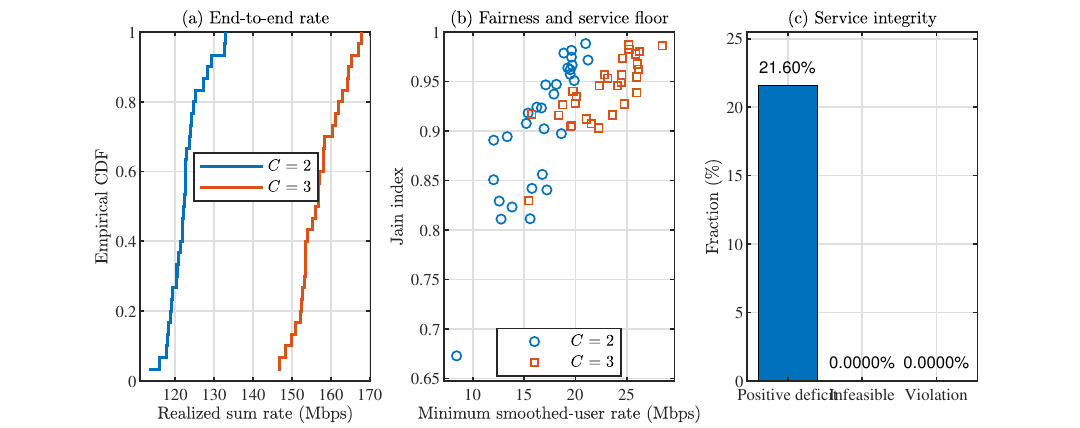}
\caption{End-to-end performance of the final deterministic controller on the untouched physical holdout.}
\label{fig:end_to_end_performance}
\end{figure*}
Fig.~\ref{fig:end_to_end_performance} evaluates the complete deterministic controller after fixing the acquisition, IRS-control, recalibration, and channel-allocation decisions. Across the $30$ untouched physical environments, the realized full-block sum rates are $122.492\pm4.466$~Mbps for $C=2$ and $156.667\pm5.858$~Mbps for $C=3$, with corresponding mean Jain indices of $0.9041$ and $0.9396$ and minimum smoothed-user rates of $16.680$ and $22.716$~Mbps, respectively. Thus, the aggregate throughput is obtained while maintaining favorable user-level fairness and minimum-rate performance. A positive service deficit occurs in $21.60\%$ of the active-user instances, confirming that the service-feasibility mechanism is actively exercised, while both the fraction of infeasible steady states and the post-action service-violation rate remain zero. The final controller therefore maintains feasible sliding-window service decisions together with stable throughput and fairness across previously unseen mobile environments.

\begin{figure*}[t!]
\centering
\includegraphics[width=\linewidth]{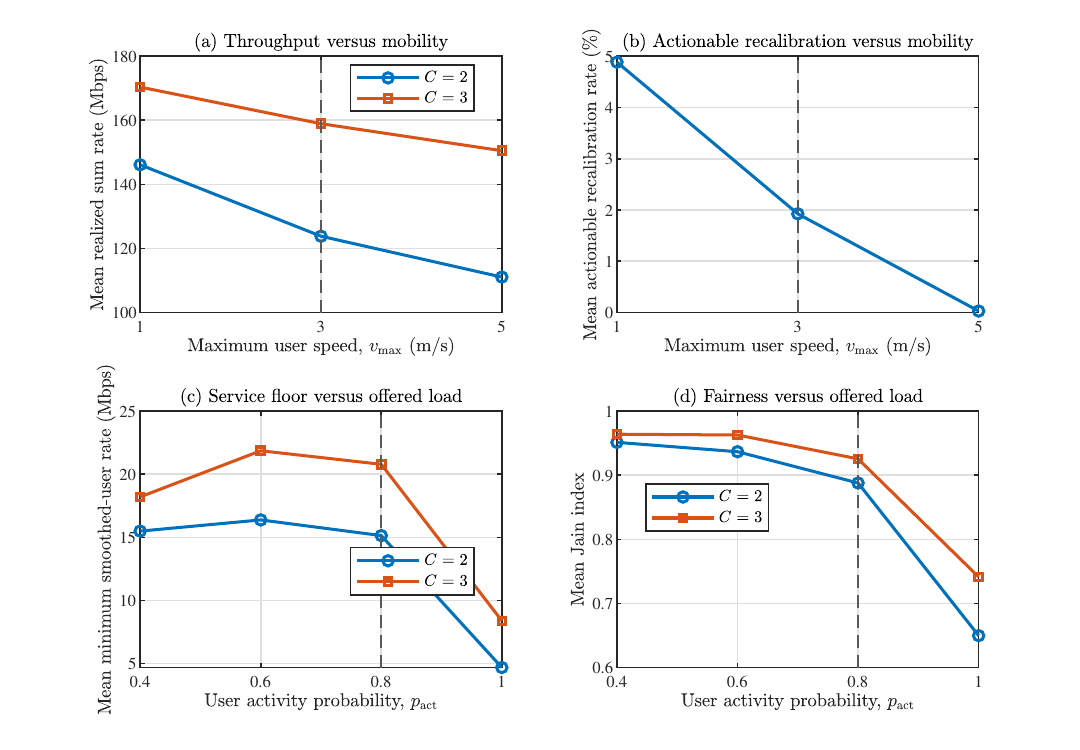}
\caption{Robustness and stress response of the final deterministic controller to mobility and offered load.}
\label{fig:robustness_mobility_load}
\end{figure*}
Fig.~\ref{fig:robustness_mobility_load} evaluates the frozen controller under varying mobility and offered load without parameter retuning. Increasing $v_{\max}$ from $1$ to $5$~m/s reduces the mean realized sum rate from $146.123$ to $111.056$~Mbps for $C=2$ and from $170.376$ to $150.506$~Mbps for $C=3$, reflecting the increased difficulty of channel
prediction. Meanwhile, the mean actionable recalibration rate decreases from $4.885\%$ to $0.029\%$, not because of improved tracking, but because the actionability condition increasingly rejects recalibrations whose predicted benefit does not justify their overhead. Under the load sweep, the minimum smoothed-user rate and Jain index remain favorable up to the nominal $p_{\mathrm{act}}=0.8$, reaching $15.138/20.791$~Mbps and $0.8880/0.9257$ for $C=2/3$, respectively, but decrease markedly under fully active traffic, identifying a saturation regime. No steady infeasible state or post-action service violation occurs at any tested mobility or load point. Thus, the controller preserves service feasibility across the tested stress conditions while exhibiting performance degradation over the relevant mobility and moderate-load range.

\section{Conclusion}
This paper addressed joint CSI maintenance and resource control in partitioned IRS-assisted mobile mMIMO IoT uplinks. O-CA tracks the direct links, while differential semi-blind acquisition refreshes cascaded links and finite-block calibration quantifies their uncertainty; between refreshes, only unit-beam means are propagated with age-dependent covariance lookup. A mixed-integer model captures throughput, fairness, outage, switching cost, and sliding-window IRS service, and is realized through actionable recalibration, feasible equal-size partitioning, residual-feasible ownership, one-sweep beam refinement, same-block payload accounting, and service-first receiver-consistent allocation with updated uncertainty-aware ZF rates. The analysis proves service-feasibility preservation, effective-channel prediction-error and distribution-free reliability bounds, and polynomial online complexity. Simulations reproduce the anchor-induced acquisition floor, validate the reduced-order uncertainty model, reduce mean recalibration by \(31.5\%\), and provide a \(14.09\%\) net-rate gain over direct-only transmission. Receiver-consistent allocation improves realized rate; ablations favor equal-size partitioning and single-sweep refinement, with no post-action service violations under nominal, mobility, or load stress.

Future work could evaluate practical hardware impairments and control delays, supported by validation using measured channels and experimental IRS platforms. Further work could also adapt the covariance-age tables under changing propagation conditions and extend to multicell systems with intercell interference and coordinated IRS control.

\end{document}